\documentclass[11pt]{article}
\usepackage{amsfonts}%
\usepackage{fullpage}
\usepackage{subcaption} 
\usepackage{booktabs} 
\usepackage{tikz}
\usepackage{amsmath}
\usepackage[noend]{algpseudocode}
\usepackage{algorithm}
\usepackage{enumerate}
\usepackage{mathrsfs}  
\usepackage{url}







  


\usepackage{amsthm}

\newtheorem{theorem}{Theorem}

\newtheorem{lemma}{Lemma}
\setlength{\fboxsep}{2pt} 

\usepackage{moresize}

\usepackage{array}
\algblockdefx{FORALLP}{ENDFAP}[1]%
  {\textbf{parallel for}#1 \textbf{do}}%
  {\textbf{end parallel for}}

\usetikzlibrary{positioning}
\usetikzlibrary{patterns}



\algnewcommand{\algorithmicand}{\textbf{ and }}
\algnewcommand{\algorithmicor}{\textbf{ or }}
\algnewcommand{\OR}{\algorithmicor}
\algnewcommand{\AND}{\algorithmicand}
\algnewcommand{\var}{\texttt}

\algblockdefx{FORALLP}{ENDFAP}[1]%
  {\textbf{parallel for}#1 \textbf{do}}%
  {\textbf{end parallel for}}

\newcolumntype{L}[1]{>{\raggedright\let\newline\\\arraybackslash\hspace{0pt}}m{#1}}
\newcolumntype{C}[1]{>{\centering\let\newline\\\arraybackslash\hspace{0pt}}m{#1}}
\newcolumntype{R}[1]{>{\raggedleft\let\newline\\\arraybackslash\hspace{0pt}}m{#1}}

\algnewcommand\algorithmicinput{\textbf{Input:}}
\algnewcommand\INPUT{\item[\algorithmicinput]}
\algnewcommand\algorithmicoutput{\textbf{Output:}}
\algnewcommand\OUTPUT{\item[\algorithmicoutput]}

\algdef{SE}[DOWHILE]{Do}{doWhile}{\algorithmicdo}[1]{\algorithmicwhile\ #1}%
\algnewcommand\algorithmicforeach{\textbf{for each}}
\algdef{S}[FOR]{ForEach}[1]{\algorithmicforeach\ #1\ \algorithmicdo}

\AtBeginDocument{%
  }

\title{Speeding-up Graph Algorithms via Clique Partitioning}


\usepackage{authblk}

\author[1]{Akshar Chavan}
\author[2]{Sanaz Rabinia}
\author[2]{Daniel Grosu}
\author[1]{Marco Brocanelli}

\affil[1]{Dept. of Electrical and Computer Engineering, The Ohio State University, Columbus, OH, USA\\
\texttt{chavan.43@osu.edu, brocanelli.1@osu.edu}}

\affil[2]{Dept. of Computer Science, Wayne State University, Detroit, MI, USA\\
\texttt{srabin@wayne.edu, dgrosu@wayne.edu}}

\date{}




\begin{document}

\maketitle
\begin{abstract}
Reducing the running time of graph algorithms is vital for tackling real-world problems such as shortest paths and matching in large-scale graphs, where path information plays a crucial role. 
To address this critical challenge, this paper introduces a graph restructuring algorithm that identifies bipartite cliques and replaces them with tripartite graphs. This restructuring leads to fewer edges while preserving complete graph path information, enabling the direct application of algorithms like matching and all-pairs shortest paths to achieve significant runtime reductions, especially for large, dense graphs.
The running time of the proposed algorithm for a graph $G(V,E)$, with $|V| = n$ and $|E| = m$
is~$O(mn^\delta)$, which is better than $O(mn^\delta \log^2 n)$, the running time of the best existing algorithm for speeding-up other graph algorithms (the Feder-Motwani (\textsf{FM}) algorithm),  where $0 \leq \delta \leq 1$. Both the \textsf{FM} algorithm and the proposed algorithm are originally formulated for bipartite graphs, but can also be applied to general directed or undirected graphs.
Our extensive experimental analysis demonstrates that the proposed algorithm achieves up to 21.26\% higher reduction in the number of edges and runs up to 105.18× faster than the \textsf{FM} algorithm. On large synthetic graphs with up to 1.05 billion edges, it attains a reduction in the number of edges of up to 74.36\%. On real-world graphs, it achieves a reduction in the number of edges by up to 46.8\%. Furthermore, when used as a preprocessing step, our approach yields up to a 2.07× speedup for the matching algorithms on large synthetic graphs, and up to a 1.74× speedup for the All-Pairs Shortest Path algorithms on real-world graphs, when compared to using the given graph as input. \looseness -1

\noindent\textbf{keywords:} graph algorithms, speeding-up graph algorithms, clique partitioning
\end{abstract}





\section{Introduction}
\label{sec:intro}

Graph algorithms are essential for analyzing complex real-world networks across diverse domains like social interactions, biological systems, and communication infrastructures. However, as these networks grow in size and complexity, the time required to process them using graph algorithms becomes excessively high. This paper tackles the critical issue of high computation time in large-size graphs by introducing a graph restructuring approach that speeds up the execution of graph algorithms such as matching and all pairs shortest paths. 
This approach involves transforming the graph into a more computationally suitable structure, preserving its path information, where path information refers to the existence of paths between vertices. Specifically, by restructuring the graph to reduce the number of edges, we aim to speeding-up graph algorithms. A key requirement for this is the preservation of complete path information during the graph restructuring process. Many crucial graph algorithms, such as matching and all-pairs shortest paths (\textsf{APSP}), rely on this global connectivity information to produce correct results. \looseness -1

However, to effectively speed up the downstream algorithms, the restructuring algorithm itself must be efficient and the resulting structure readily usable. Specifically, it must meet two critical conditions: 1) the graph restructuring algorithm must have a low running time to ensure that its use as a preprocessing step leads to an overall reduction in execution time compared to running the algorithms on the original graph; and 2) the restructured graph should be directly usable as input to these algorithms or require minimal modifications. These conditions highlight the inherent challenges in designing effective graph restructuring algorithms for speeding-up graph algorithms. In this paper, we address these challenges and propose a new  Clique Partitioning based Graph Restructuring (\textsf{CPGR}) algorithm that restructures the graph to reduce the number of edges while preserving the path information. We theoretically and experimentally show that our algorithm achieves faster running time and greater  reduction in the number of edges compared to the algorithm proposed by Feder and Motwani~\cite{federMotwani}, i.e., the best existing  algorithm used to speeding-up the execution of graph algorithms. \looseness -1

\subsection{Related Work}
\label{sec:related_works}

\begin{table}[b!]

\centering
\caption{Overview of Graph Restructuring Methods} 
\small
\begin{tabular}{p{8em} p{8em} p{8em} p{8em} p{8em}}
\toprule
\textbf{Technique} & \textbf{Goal} &\textbf{ Mechanism}  & \textbf{Applications} & \textbf{Limitations\hspace*{0.1cm}for Speeding-up\hspace*{0.1cm}Graph $\qquad \qquad $ Algorithms} \\
\toprule
Sparsification \cite{ijcai2024p891, John_Safro_2016, 08074489X} & Reduce \# of Edges & Edge\hspace*{0.1cm}sub-sampling or filtering & Graph $\qquad \qquad $ Algorithms, $\qquad \qquad $Visualization & Accuracy loss, potential disconnection  \\ \midrule
Coarsening \cite{chen2022graph, chen2023gromov} & Reduce \# of Vertices & Vertex grouping / aggregation & Multilevel\hspace*{0.1cm}Methods, GNN Scaling  & Loss of fine-grained detail, interpolation error  \\ \midrule
Hierarchical Decomposition \cite{ besta_survey, bulucc2016recent, EPSTEIN2008612, li2017hierarchical}& Reduce Problem Complexity & Subdivide problem using graph structure &  Optimization, Modeling  & Applicability depends on problem structure \\ \midrule
Lossless Compression \cite{DAG_compression, besta_survey, 1scalable, chakrabarti2002locally, compressing_bisection, lossless_contraction} & Reduce Storage Space & Efficient encoding of graph data& Storage, Querying Compressed Data  & Query overhead, less effective on irregular graphs \\ \midrule
Clique Partitioning \cite{federMotwani}, \textbf{\textsf{CPGR}} [this paper] & Speeding-up Graph\hspace*{0.1cm}Algorithms & Extracting\hspace*{0.1cm}bipartite cliques & Graph\hspace*{0.1cm}Algorithms, $\qquad \qquad $ Storage & \multicolumn{1}{c}{--}\\ 
\bottomrule
\end{tabular}

\label{table:current_approaches}
\end{table}

Graph-structured data are common across domains such as social networks, biology, and the web, often reaching billions of edges. To address the computational challenges posed by such scale, graph restructuring techniques have emerged as a key approach for improving the efficiency of graph algorithms through structural transformation.  
These restructuring methods aim to produce graph representations that are enabling fast computation. These approaches include sparsification, graph coarsening, hierarchical decomposition, and lossless compression, each offering trade-offs between speed and accuracy as shown in Table~\ref{table:current_approaches}. \looseness -1

Graph sparsification reduces the edge count while approximately preserving properties like cuts, distances, or spectral characteristics, enabling faster execution of algorithms such as PageRank, at the cost of precision \cite{ijcai2024p891, John_Safro_2016,08074489X}.
Graph coarsening merges nodes and edges to create a smaller approximation of the original graph, commonly used in multilevel algorithms for partitioning and clustering \cite{chen2022graph, chen2023gromov}. While this improves scalability, it can obscure fine-grained structural details.
Hierarchical decomposition organizes graphs into nested clusters or layers, facilitating divide-and-conquer approaches in tasks like community detection and multiscale visualization \cite{ besta_survey,bulucc2016recent, EPSTEIN2008612, li2017hierarchical}. The main drawback lies in the overhead of maintaining and traversing hierarchical structures.
Lossless compression techniques aim to reduce the graph size while preserving exact recoverability, typically using encoding, redundancy elimination, or succinct data structures \cite{DAG_compression, besta_survey, 1scalable, chakrabarti2002locally, compressing_bisection, lossless_contraction}. Although accurate, such methods can add significant overhead. \looseness -1

Despite their advantages, many of these techniques either compromise path-preservation or introduce substantial preprocessing overhead, limiting their applicability to algorithms that require full connectivity information such as matching and APSP.
A more targeted approach is path-preserving graph restructuring, proposed by  Feder and Motwani \cite{federMotwani}, that partitions the graph into bipartite cliques to speeding-up downstream graph algorithms such as bipartite matching, APSP, and vertex connectivity. Their approach guarantees path preservation but suffers from sequential dependencies during clique extraction, which may limit the performance in terms of execution time and reduction in the number of edges. \looseness -1

Building on this foundation, our work proposes a novel Clique Partitioning-based Graph Restructuring (\textsf{CPGR}) algorithm. Unlike Feder and Motwani’s iterative singleton clique extraction, \textsf{CPGR} introduces multiple and larger clique extraction, enabling more efficient restructuring while preserving exact path information. This makes \textsf{CPGR} a promising candidate for preprocessing large-scale graphs to accelerate a wide range of graph algorithms. \looseness -1

\section{Background and Motivation}
\label{sec:b_and_m}
\subsection{Background}
\label{sec:background}

\begin{figure*}[t]
\centering
    \includegraphics[width=0.65\linewidth]{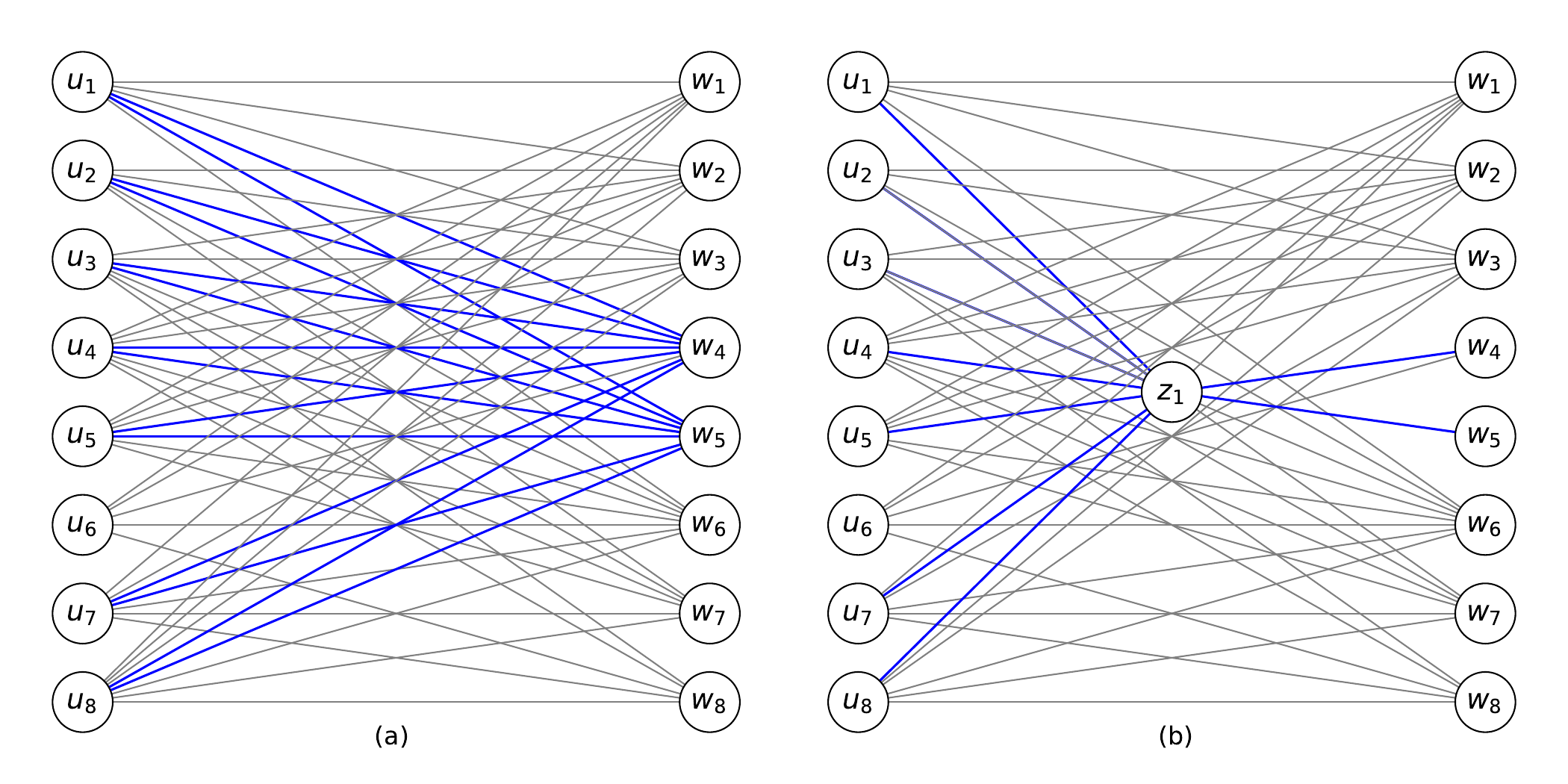}  
      \caption{(a) Given bipartite graph $G(U,W,E)$ and (b) the tripartite graph that replaces the $\delta$-clique with left partition $\{u_1,u_2,u_3, u_4, u_5, u_7, u_8\}$ and right partition $\{w_4, w_5\}$ in the restructured graph $G^{*}(U,W,Z, E^{*})$.}
  \label{fig:given_graph}
\end{figure*}

Feder and Motwani~\cite{federMotwani} proposed a graph restructuring algorithm—referred to as \textsf{FM} in this paper—designed for bipartite graphs~$G(U, W, E)$, where $|U| = |W| = n$ and $|E| = m$. The algorithm identifies complete bipartite subgraphs called $\delta$-\emph{cliques} with left and right partitions of size $\lceil n^{1 - \delta} \rceil$ and $k(n, m, \delta) = \big\lfloor \frac{\delta \log n}{\log(2n^2/m)} \big\rfloor$, respectively, for some constant $0 \leq \delta \leq 1$.
Each $\delta$-clique is then replaced by a more compact tripartite graph that introduces a new vertex~$z_q$, connecting it to all vertices in the clique’s left and right partitions. This transformation reduces the number of edges from $|U_q| \times |W_q|$ to $|U_q| + |W_q|$ per clique, achieving significant reduction in the number of edges while preserving the original graph’s connectivity between $U$ and $W$. For instance, Figures~\ref{fig:given_graph}a and~\ref{fig:given_graph}b show an example of a $\delta$-clique with left partition $\{u_1,u_2,u_3, u_4, u_5, u_7, u_8\}$ and right partition $\{w_4, w_5\}$ in a bipartite graph~$G$, and the corresponding tripartite graph that replaces it in the restructured graph~$G^*$, respectively. The number of edges in the $\delta$-clique is $14 = 7 \times 2$ while the number of edges in the corresponding tripartite graph is $9=7+2$, thus the number of edges in the restructured graph is reduced by~5. Thus, the restructured graph $G^* = (U, W, Z, E^*)$ consists of three disjoint vertex sets, where $Z$ contains one new vertex per extracted clique, and edges in $E^*$ connect vertices across these partitions. Although designed for bipartite graphs, the approach can be extended to general graphs as well.
Further details on the \textsf{FM} algorithm, including the clique extraction process and the use of neighborhood trees, are provided in Appendix~\ref{sec:appendix:FM}. \looseness -1

\subsection{Motivation}
\label{sec:motivation}

We motivate our approach by highlighting three critical aspects related to extracting and replacing $\delta$-cliques that influence both the running time and the reduction in the number of edges. These aspects arise in existing approaches and suggest opportunities for improving how dense substructures are identified and processed.

\vspace*{0.07cm}
\noindent\textbf{Vertex selection.}
A key challenge in identifying $\delta$-cliques for reducing the number of edges lies in selecting vertices for the right partition~$W_q$ that share a large set of common neighbors in~$U$. In approaches such as the \textsf{FM} algorithm, vertices for the right partition are selected using a combinatorial function that accounts for vertex degrees across different levels of a neighborhood tree. At each level of the neighborhood tree, vertex degrees within the corresponding partitions are aggregated to determine which branch to explore. As a result, vertex selection is driven by combined degree contributions across partitions rather than by the individual potential of vertices to form dense bipartite subgraphs with~$U$, which may cause vertices with higher degree to be overlooked when forming dense $\delta$-cliques.

\begin{figure*}[t]
\centering
    \includegraphics[width=\linewidth]{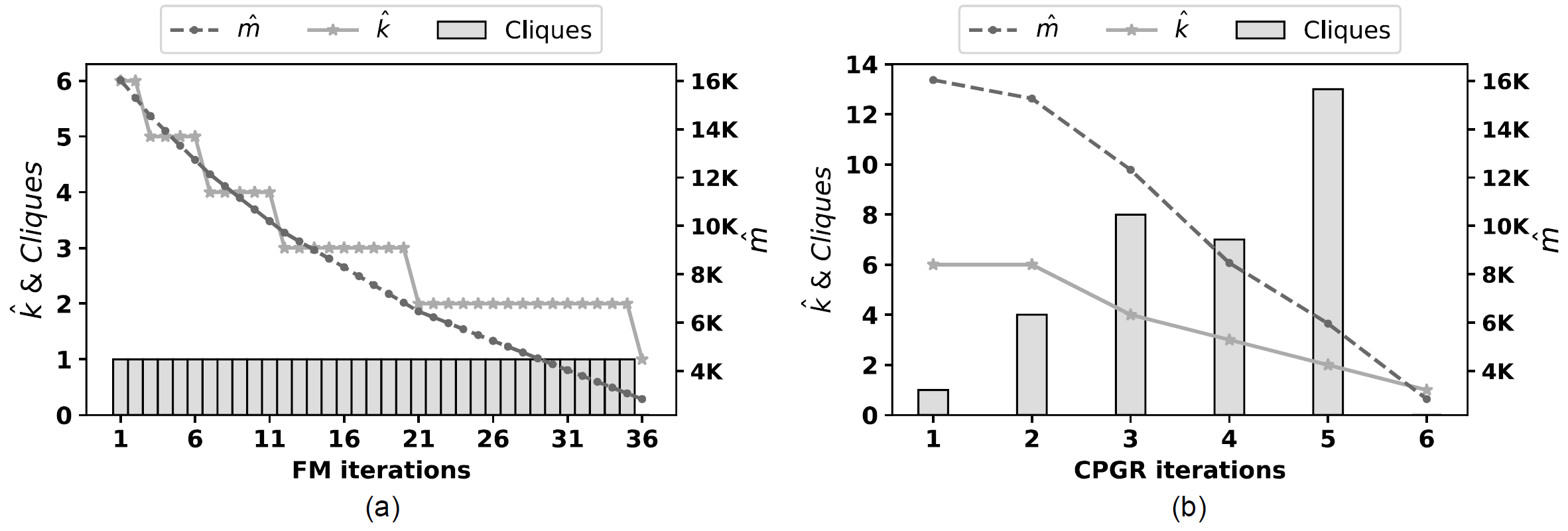}  
    \label{fig:k_hatAndM_hat(fm)}
  \caption{Progression of $\hat{m}$, $\hat{k}$, and number of cliques extracted for a graph with 128 vertices in each bi-partition, density 0.98, and $\delta = 1$ by (a) \textsf{FM}  and  (b) \textsf{CPGR}.}
  \label{fig:motivation}
\end{figure*}

\vspace*{0.07cm}
\noindent\textbf{Extracting $\delta$-cliques.}
The process of forming $\delta$-cliques relies on selecting $\hat{k} = k(n,\hat{m},\delta)$ vertices from the right partition~$W$, where $\hat{m}$ is the number of remaining edges in the graph and $n$ is the number of vertices in each partition. As the algorithm progresses and more cliques are extracted, the number of edges~$\hat{m}$ decreases, which in turn affects the value of~$\hat{k}$. 
Due to the stepwise nature of the function $k(n,\hat{m},\delta)$, the value of~$\hat{k}$ often remains constant across multiple iterations. As illustrated in Figure~\ref{fig:motivation}a, this results in a staircase-like progression in which several iterations operate under the same~$\hat{k}$ while removing relatively few edges per iteration. This behavior suggests that the iterative structure may not fully exploit the potential of each~$\hat{k}$ value, particularly during phases in which $\hat{k}$ remains unchanged over multiple iterations.

\vspace*{0.07cm}
\noindent\textbf{Update mechanism.}
After selecting vertices from the right partition, the manner and timing of degree updates play a crucial role in determining which edges remain available for future $\delta$-clique discovery. In the existing approach (\textsf{FM} algorithm), degree updates are applied incrementally as candidate vertices are selected, rather than after a $\delta$-clique has been fully determined. As a result, degree reductions may be applied to vertices based on partial clique construction, rather than solely on edges that are ultimately removed by an extracted clique. 
Such early updates can reduce the availability of edges that could participate in subsequent $\delta$-cliques, particularly in sparse or moderately dense graphs. This observation highlights the importance of update mechanisms that accurately reflect only the edges eliminated by an extracted clique.
\looseness=-1

\subsection{Our contributions}

This paper introduces a novel algorithm, the Clique Partitioning-based Graph Restructuring (\textsf{CPGR}) algorithm. Our main contributions are summarized as follows:
\begin{enumerate}

    \item We present \textsf{CPGR}, a graph restructuring algorithm that removes multiple $\delta$-cliques within a single iteration, achieving a running time of $O(mn^{\delta})$. This improves upon the $O(mn^{\delta}\log^2 n)$ running time of the \textsf{FM} algorithm.

    \item \textsf{CPGR} allows vertices to participate in multiple clique extractions during the restructuring process. For a fixed $\hat{k}$, the algorithm is designed to extract as many valid $\delta$ cliques as possible in each iteration, which increases the total number of extracted cliques, particularly in large and high density graphs, and leads to a larger number of edges removed per iteration, as illustrated in Figure~\ref{fig:motivation}b.

    \item \textsf{CPGR} achieves an average reduction in the number of edges that is \emph{at least} as good as the \textsf{FM} algorithm.

     \item Experimental analysis show that \textsf{CPGR} achieves up to~21\% higher reduction in the number of edges and runs up to 105.18× faster than the \textsf{FM} algorithm. \looseness -1

    \item Experimentally, we show the practical utility of \textsf{CPGR} as a preprocessing step, achieving a speedup of up to 2.07× for matching and 1.74× for \textsf{APSP} on different graph datasets.
\end{enumerate}

\section{The Proposed Algorithm}
\label{sec:proposed_algorithms}
\subsection{Clique Partitioning-based Graph Restructuring \textsf{(CPGR)} Algorithm}

This section details the design of our Clique Partitioning-based Graph Restructuring (\textsf{CPGR}) algorithm, presented in Algorithm~\ref{alg:CPGR}.
For a given input graph~$G$, \textsf{CPGR} obtains a restructured graph~$G^{*}$ of~$G$ by iteratively finding bipartite cliques (i.e., complete bipartite subgraphs) in~$G(U, W, E)$ and replacing them with corresponding tripartite graphs until finding new bipartite cliques does not contribute to the reduction in the number of edges of~$G$. The size of the right partition of the cliques is determined by~$\hat{k} = k(n,\hat{m}, \delta)$, which guarantees a $\delta$-clique, and the selection of vertices~$w_j \in W$ depends on the degree of vertices~$d_{w_j} = |N(w_j)|$, where $N(w_j)$ denotes the set of neighbors of vertex~$w_j$. 
The partition of~$G$ into bipartite cliques guarantees that each edge in~$G$ is in exactly one clique, i.e., 
$E(C_i) \cap E(C_j) = \emptyset$,  $\forall i, j$ and $ i \neq j$, where $E(C_i)$ is the set of edges of $\delta$-clique~$C_i$. 
The input of \textsf{CPGR} consists of the adjacency matrix~$A$ of~$G$, and a constant~$\delta$. \looseness=-1

\vspace*{0.1cm}
\noindent {\bf Initialization} (Lines~1-5). For the given bipartite graph $G(U,W, E)$, \textsf{CPGR} initializes the index~$q$ of the bipartite cliques extracted from~$G$, the number of vertices, $n$ where $n = |U| = |W|$, the number of edges~$\hat{m}$, and the degree of the vertices ${d_{w_j}}$ $\forall j=1, \ldots,  |W|$,  initialized to~$[0]_{n}$, the zero vector fo size $n$. \textsf{CPGR} also initializes the set of extracted bipartite cliques~$\mathcal{C}$, to the empty set.

\label{subsec:CPGR}
\begin{algorithm}[!t]
\caption{{\small  \textsf{CPGR}: Clique Partitioning-based Graph Restructuring Algorithm}}
{
\begin{algorithmic}[1]
\INPUT {$A$: Adjacency matrix;   }
\Statex	\hspace*{0.55cm}{$\delta$: a constant such that $0\leq \delta \leq 1$. }
\State{$q \leftarrow 0 $}
\State{$n \leftarrow|W|$}
\State{$\hat{m} \leftarrow|E|$}
\State{${d_w} \gets [0]_n$}
\State{$\mathcal{C} \gets \emptyset$}

\For{$i = 1,\ldots, n$}
    \For{$j = 1,\ldots, n$}
        \State{ ${d_{w_j}} \gets  {d_{w_j}} +  E_{{u_i}, w_j}$}
    \EndFor
\EndFor
\State{$\hat{k} \gets \Big\lfloor{\frac{\delta \log n}{\log(2n^2/\hat{m})} }\Big\rfloor$}
\While{$\hat{k} > 1$ } 
    \State{($\mathscr{C}, \hat{A} , \hat{d_w} ) \gets \textsf{CSA(}q, \hat{k}, A, n, d_w\textsf{)}$ }
    \State{$\mathcal{C} \gets \mathcal{C} \cup  \mathscr{C} $}
    \State{$d_w \gets \hat{d_w}$}
    \State{$ A \leftarrow \hat{A}$}
    \State{$\hat{m} \gets \sum_{j=1}^{|W|} d_{w_j}$}
    \State{$ q \leftarrow q+ |\mathscr{C}|$}
    \State{$\hat{k} \gets \Big\lfloor{\frac{\delta \log n}{\log(2n^2/\hat{m})} }\Big\rfloor$}
\EndWhile
\State{$ Z \gets  \emptyset$; $ E^* \gets \emptyset$; $\mathcal{E}_{\mathcal{C}} \gets \emptyset$ }
\For{$i \in \mathcal{C} =    \{\mathscr{C}_1(U_1, W_1), \ldots, \mathscr{C}_{q}(U_{q}, W_{q})\}$}
    \State{$ Z \gets  Z \cup \{z_i\}$}
    \State{$ E^* \gets   E^* \cup \{(u, z_i), (z_i, w) \} \quad \forall u \in U_i, \quad \forall w \in W_i $}
    \State{$\mathcal{E}_{\mathcal{C}} \gets   \mathcal{E}_{\mathcal{C}} \cup E_{i}$}
\EndFor
\State{$E^*   \gets  E^* \cup \{(u, w) | (u, w) \in E \setminus  \mathcal{E}_{\mathcal{C}}\}$}
\State{\textbf{Output:} $G^*(U,Z, W, E^*)$, the restructured graph of~$G$.}
\end{algorithmic}
\label{alg:CPGR}
}
\end{algorithm}

\vspace*{0.1cm}
\noindent \textbf{Partition size} (Lines 6-9). 
\textsf{CPGR} computes the degree of each vertex $w_j \in W$, $d_{w_j} = |N(w_j)|$, where $E_{{u_i}, w_j}$ is~1, if there is an edge between~${u_i}$ and~$w_j$, and 0, otherwise (Lines 6-8). It determines the size of the right partition of the bipartite clique, $\hat{k}$ (Line 9), which guarantees the existence of a $\delta$-clique with $k(n,\hat{m}, \delta)$~\cite{federMotwani}. It is important to say that $\hat{m}=|E|=\sum_{j=1}^{|W|} d_{w_j}$. \looseness=-1

\vspace*{0.1cm}
\noindent \textbf{Clique extraction} (Lines 10-17). \textsf{CPGR} proceeds with extracting $\delta$-cliques until extracting new cliques does not contribute to the reduction in the number of edges. This happens when $\hat{k} = 1$, which results in obtaining trivial bipartite cliques. Thus, the while loop in Lines~10 to~17 is executed until trivial cliques are produced.

\emph{Clique Stripping Algorithm} (\textsf{CSA}), presented in Algorithm~\ref{alg:CSA}, finds $\delta$-cliques in the bipartite graph~$G$. \textsf{CSA} takes the updated index~$q$ of the bipartite clique, the size~$\hat{k}$  of the right partition of the bipartite clique, the adjacency matrix~$A$, the size~$n$ of the left partition of~$G$, and the degree~$d_{w_j} = |N(w_j)|, \ \forall w_j \in W$ as input.
\textsf{CSA}, which is described in Subsection~\ref{subsec:S-CPGR}, returns the set~$\mathscr{C}$  of bipartite cliques that it determined in the current execution, the updated adjacency matrix~$\hat{A}$, and the updated degree of vertices~$\hat{d_w}$. 
n Line~12, \textsf{CPGR} adds the new bipartite cliques to the set of all bipartite cliques~$\mathcal{C}$, updates the degree of vertices~$d_w$ (Line~13), and adjacency matrix~$A$ (Line~14). In Line 15, \textsf{CPGR} updates the number of edges, $\hat{m}$ in the given graph $G(U,W,E)$ after extracting the $\delta$-cliques, and the clique index~$q$ (Line~16). Finally, in Line~17, after the $\delta$-cliques are removed, \textsf{CPGR} updates~$\hat{k}$ for the next iterations. \looseness=-1

\vspace*{0.1cm}
\noindent \textbf{Graph restructuring} (Lines 18-23).
When the while loop (Lines 10-17) terminates, \textsf{CPGR} restructures the graph by adding a new vertex set~$Z$ to the bipartite graph.
Each vertex in the left and right partitions of clique $\mathscr{C}_q$ is connected to a new vertex $z_q$, reducing the number of edges from $|U_q| \times |W_q|$ to $|U_q| + |W_q|$ (Lines~18-23). 
In Line~22, $\mathcal{E}_{\mathcal{C}}$ is updated with edges from the extracted $\delta$-cliques.
Remaining edges in the bipartite graph $G(U,W, E)$, not part of any bipartite $\delta$-cliques, are connected directly in the tripartite graph (Line~23). In Line 24 the algorithm returns the restructured graph $G^*(U,Z, W, E^*)$ having a reduced number of edges. \looseness=-1

In Appendix~\ref{sec:appendix:example} we provide an example showing the execution of \textsf{CPGR}. 

\subsection{Clique Stripping  Algorithm \textsf{(CSA)}}
\label{subsec:S-CPGR}

The Clique Stripping Algorithm \textsf{(CSA)}, given in Algorithm~\ref{alg:CSA}, extracts $\delta$-cliques from the given bipartite graph~$G(U,W, E)$. \textsf{CSA} initially selects the vertices for the right partition~$K_c$ of the clique~$C_c$ and then the common neighbors for the set~$K_c$. The input of \textsf{CSA} consists of the updated index~$q$ of the clique, the size~$\hat{k}$ of the right partition  which guarantees a $\delta$-clique in~$G$, the adjacency matrix~$A$, the number of vertices~$n$, and the
degree $d_{w_j} =  |N(w_j)|$ . \looseness=-1

 \begin{algorithm}[!b]
\caption{{\small \textsf{CSA}: Clique Stripping Algorithm }}
{
\begin{algorithmic}[1]
\INPUT {$q$: Index of bipartite clique; }
\Statex	\hspace*{0.45cm}{$\hat{k}$: Size of the right bipartition in bipartite clique;}
\Statex	\hspace*{0.45cm}{$A$: Adjacency matrix; }
\Statex	\hspace*{0.45cm}{$n$: Number of vertices in left/right bipartition of $G$;}
\Statex	\hspace*{0.45cm}{$d_{w}$: Degree of vertices in bipartition $W$.}

\State{$ \mathcal{K} \gets \emptyset$}
\State{Sort $d_w$ in non-increasing order. Let $ d_{w_{\pi(1)}}, d_{w_{\pi(2)}},\ldots,d_{w_{\pi(n)}} $ be the order.}
\State{$j \gets 1$}
\While{$d_{w_{\pi(j)}} \geq d_{w_{\pi(\hat{k})}}$}
    \State{$ \mathcal{K} \gets \mathcal{K} \cup \{w_{\pi(j)}\}$}
    \State{$j \gets j + 1$}
\EndWhile
\State{$\gamma \gets   \big\lfloor{\frac{|\mathcal{K}|}{\hat{k}}} \big\rfloor$}
\For{ $ c = q + 1,\ldots,q + \gamma $}
    \State{$K_c \gets \{ w_{\pi_{(j)}} \in \mathcal{K} \ | \ (c - (q + 1) )\cdot \hat{k} < j \leq (c -q ) \cdot \hat{k} \} $}
\State{$U_{K_c} \gets \{ u_i \in U | K_c \subseteq N(u_i)\}$ }
\State{$C_c \gets \{ (U_{K_c}, K_c) \}$}
\State{Update $A$, by removing edges $( U_{K_c} \times  K_c) \in C_c$ }
\State{Update $d_w$ by subtracting $|U_{K_c}|$ from each $d_{w_{\pi(j)}}$, where $w_{\pi(j)} \in K_c$}
\EndFor
\State{$\mathscr{C} \gets \{C_{q+1},\ldots,C_{q + \gamma} \} $}

\State \textbf{Output} ($\mathscr{C}, \hat{A}, \hat{d_w}$)
\end{algorithmic}
\label{alg:CSA}
}
\end{algorithm}

\vspace*{0.07cm}
\noindent \textbf{Initialization} (Lines 1-3). \textsf{CSA} initializes the set~$\mathcal{K}$  of selected vertices for clique extraction to the empty set. 
In Line~2, it sorts~$d_w$ in non-increasing order of the degrees of vertices $w_j \in W$, i.e., $d_{w_{\pi(1)}} \geq d_{w_{\pi(2)}} \geq \ldots \geq d_{w_{\pi(n)}}$. In Line~3 it initializes the sorted vertices list index~$j$ to~1. \looseness=-1

\vspace*{0.07cm}
\noindent \textbf{Vertex selection} (Lines 4--7).
\textsf{CSA} selects vertices for clique extraction based on their degrees, since vertices with higher degree are more likely to share a larger set of common neighbors and thus participate in dense bipartite cliques. Rather than selecting exactly $\hat{k}$ vertices, \textsf{CSA} constructs a candidate set
$ \mathcal{K} = \{ w_{\pi(j)} \mid d_{w_{\pi(j)}} \ge d_{w_{\pi(\hat{k})}} \}$,
which contains all vertices whose degree is at least the $\hat{k}$-th largest degree in the sorted sequence~$d_w$. As a result, $|\mathcal{K}| \ge \hat{k}$ in general.
\textsf{CSA} adds these vertices to~$\mathcal{K}$ by iterating through the sorted order and incrementing the index~$j$ (Lines~5--6). Thus, the while-loop in Lines~4--6 ensures that all vertices tied at the $\hat{k}$-th degree threshold are included, avoiding arbitrary tie-breaking and forming a maximal candidate set for the right partition.

This threshold-based selection increases the likelihood of identifying a larger set of common neighbors in the left partition, thereby improving the quality of the extracted $\delta$-cliques. In Line~7, \textsf{CSA} computes $\gamma = \lfloor |\mathcal{K}| / \hat{k} \rfloor$, which determines how many $\delta$-cliques can be extracted. The set~$\mathcal{K}$ is then partitioned into $\gamma$ subsets of size~$\hat{k}$ (Line~9), allowing \textsf{CSA} to extract multiple $\delta$-cliques in a single execution.




\vspace*{0.07cm}
\noindent \textbf{Clique extraction} (Lines 8-13).
The for loop (Lines 8-9) first partitions~$\mathcal{K}$ into subsets~$K_c$, such that
$\bigcup_{c=q+1, \ldots, q+\gamma}K_c = \mathcal{K}$ 
and $K_{q+i}  \bigcap K_{q+j} = \emptyset, \ i = 1, \ldots, \gamma, j = 1, \ldots, \gamma, i \neq j.$
\label{eq:set_k_eq2}
Then it constructs $U_{K_c}$ such that each vertex in $U_{K_c}$ is part of the set of common neighbors of~$K_c$ (Line~10). In Line~11, \textsf{CSA} forms clique $C_c$ with partitions~$U_{K_c}$ and~$K_c$, where $|U_{K_c}| \geq \lceil{n^{1-\delta}\rceil} $ and $|K_c| = \hat{k}$. After forming the clique~$C_c$, in Line~12, the algorithm updates the adjacency matrix by removing the edges in the clique~$C_c$ and in Line~13, updates the degrees of the vertices that are part of the clique by removing the size of left partition from the degree of each vertex~$w_{\pi(j)} \in K_c$, i.e., $d_{w_{\pi(j)}} = d_{w_{\pi(j)}} - |U_{K_c}|, \  \forall \  w_{\pi(j)} \in K_c , \ c = q+1, \ldots q+\gamma$. Finally, in Line~14, \textsf{CSA} forms~$\mathscr{C}$, the set of all the cliques extracted in the current execution, and in Line~15, it returns the set of cliques~$\mathscr{C}$, the updated adjacency matrix~$\hat{A}$, and the updated degrees of vertices~$\hat{d}_w$ to~\textsf{CPGR}. \looseness=-1

\vspace*{0.07cm}
\noindent \textbf{Time complexity of \textsf{CSA}.} In Line~2 \textsf{CSA} takes $O(n \log n)$ to sort the degrees of vertices~$d_w$. The while loop in Lines 4-6, takes at most $O(n)$ to select vertices for the $\delta$-cliques. The for loop (Lines 8-13) executes~$O(\gamma)$ times, where~$1\leq \gamma \leq \frac{n}{\hat{k}}$. In the for loop, Line 9 and~10 takes $O(\hat{k})$ and $O(n\hat{k})$ to find the right and left partition of the $c$-th $\delta$-clique, respectively.  \textsf{CSA} takes $O(n\hat{k})$ to remove the edges in the $c$-th $\delta$-clique, in Line~12. It takes $O(\hat{k})$ to update the degrees $d_{w}\in K_{c}$, in Line~13. Therefore, the total running time of \textsf{CSA} is dominated by Lines 10 and 12 and thus  \textsf{CSA} takes $O(n \gamma \hat{k})$ time. \looseness=-1

\vspace*{0.07cm}
\noindent {\textbf{Time complexity of} \textbf{\textsf{CPGR}}}.  \textsf{CPGR} takes $O(n^2)$ to calculate the degrees~$d_w$ of vertices in~$W$ in Lines 6-8. The running time of the while loop (Lines 10-17) is dominated by the running time of the function \textsf{CSA} in Line~11 which is given by Algorithm~\ref{alg:CSA}. \textsf{CSA} takes $O(n\gamma \hat{k})$ to remove $\gamma \hat{k} n^{1-\delta}$ edges. Therefore, on average it takes $O(n^{\delta})$  time to remove one edge from the given graph. Thus, the while loop in \textsf{CPGR} takes $O(m n^{\delta})$ time to extract $\delta$-cliques. In the end, to restructure the graph, \textsf{CPGR} in Lines 18-21, takes linear time. Therefore, the total running time of \textsf{CPGR} is $O(mn^{\delta})$ which is dominated by the while loop in Lines 10-17.   \looseness=-1

\section{Properties of \textsf{CPGR}}
\label{properties}
The transformation performed by \textsf{CPGR} preserves vertex-to-vertex reachability and the connected components of the original graph. Consequently, the existence of paths between any pair of vertices in $G$ is maintained in the restructured graph $G^*$. However, \textsf{CPGR} does not necessarily preserve local structural properties such as vertex degrees, neighborhood structure, or exact shortest-path distances, since edges belonging to $\delta$-cliques are replaced by paths of length two.  
We formalize this preservation of path information in the following theorem.

\begin{theorem}
\label{theorem1}
The restructured graph~$G^*(U,W,Z,E^*)$ obtained by {\rm \textsf{CPGR}} preserves the path information (i.e., vertex-to-vertex reachability)  of the original graph~$G(U,W,E)$. 
\end{theorem}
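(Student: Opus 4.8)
The plan is to show that for every pair of vertices $u_i \in U$ and $w_j \in W$, there is a path between $u_i$ and $w_j$ in $G$ if and only if there is a path between $u_i$ and $w_j$ in $G^*$. Since $G^*$ differs from $G$ only by the local replacement of each extracted $\delta$-clique $C_q = (U_{K_q}, K_q)$ with a ``star through $z_q$'' gadget, and since the cliques partition the edge set $E$ (so every edge of $G$ is either a trivial clique carried over verbatim to $G^*$, or belongs to exactly one $C_q$), it suffices to argue the claim edge-by-edge through this transformation.

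First I would establish the forward direction: take any edge $(u_i, w_j) \in E$. If $(u_i,w_j)$ is a trivial clique, it appears directly in $E^*$ by Line~15 of \textsf{CPGC}, so connectivity is trivially preserved. Otherwise $(u_i,w_j) \in E(C_q)$ for a unique $q$, which by the definition of a $\delta$-clique means $u_i \in U_{K_q}$ and $w_j \in K_q$; then by Line~14, $E^*$ contains both $(u_i, z_q)$ and $(z_q, w_j)$, giving the length-2 path $u_i - z_q - w_j$ in $G^*$. Concatenating these replacements along any $u$--$w$ walk in $G$ yields a walk in $G^*$ between the same endpoints, hence reachability is preserved from $G$ to $G^*$.

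Next I would establish the reverse direction: any walk in $G^*$ between $u_i$ and $w_j$ can be converted back to a walk in $G$. The only vertices of $G^*$ not in $G$ are the $z_q$'s, and each $z_q$ has neighbors exactly $U_{K_q} \cup K_q$, i.e. $z_q$ is adjacent only to vertices that were pairwise joined (across the bipartition) by edges of $C_q$ in $G$. So any subwalk $a - z_q - b$ with $a,b \in U_{K_q}\cup K_q$ can be rerouted in $G$: if $a$ and $b$ lie on opposite sides, the edge $(a,b)$ itself is in $E(C_q)\subseteq E$; if they lie on the same side, pick any vertex $c$ on the other side of $C_q$ and use the length-2 path $a-c-b$, both of whose edges are in $E(C_q)$. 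Splicing these local detours out of the $G^*$-walk produces a $G$-walk with the same endpoints, completing the equivalence. Finally, I would note that $U$ and $W$ themselves are untouched by the construction, so the statement ``connectivity between $u_i\in U$ and $w_j\in W$ is preserved'' is exactly what the two directions give.

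The main obstacle is not any single step but getting the bookkeeping right: one must be careful that the cliques genuinely partition $E$ (so that no edge is double-counted or dropped when passing to $G^*$) — this is asserted in the description of \textsf{CPGC} via $E(C_i)\cap E(C_j)=\emptyset$ and the updating of the adjacency matrix in \textsf{CSA} Line~12 — and that the $z_q$-rerouting argument handles the degenerate case $|U_{K_q}|=1$ or $\hat{k}=1$ correctly (the latter is excluded since the while loop stops at $\hat{k}>1$, and the former is fine because a nonempty one side still lets the opposite side be traversed). I would also remark that the argument in fact preserves more than mere reachability: it preserves reachability through $U$--$W$ alternating structure in the sense relevant to matching, but for this theorem plain path/connectivity preservation suffices.
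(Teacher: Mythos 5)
Your proof is correct and its core mechanism is the same as the paper's: each edge $(u_i,w_j)$ of an extracted clique $C_q$ is replaced by the length-2 path $u_i - z_q - w_j$, and edges not in any clique are carried over verbatim, so every $u$--$w$ connection in $G$ survives in $G^*$. The paper's proof stops there — it only argues this forward direction (old paths persist) and does not address whether $G^*$ could contain \emph{new} connectivity absent from $G$. You additionally prove the converse by observing that each $z_q$ is adjacent only to $U_{K_q}\cup K_q$ and that any subwalk $a - z_q - b$ can be rerouted inside $E(C_q)$ (directly if $a,b$ lie on opposite sides of the clique, via an intermediate clique vertex otherwise). That extra half is a genuine strengthening: for the intended applications (e.g., matching via augmenting paths) it matters that the gadget does not fabricate spurious $u$--$w$ reachability, and your rerouting argument is exactly what certifies this. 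Your attention to the degenerate cases ($\hat{k}>1$ enforced by the while loop, nonemptiness of the opposite side for rerouting) is also more careful than the paper's one-clique sketch followed by ``this holds for all cliques.'' No gaps; if anything, your version is the one I would keep.
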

\begin{proof} Let's assume that \textsf{CPGR} extracts only one $\delta$-clique $C_q$ from the given graph~$G$.  In  the $\delta$-clique~$C_q$, the right partition~$K_q\subseteq W$ is formed by selecting~$\hat{k}$ vertices from~$W$ and the left partition~$U_q \subseteq U$ is formed with the common neighbors of~$K_q$, forming a complete bipartite graph with edge set~$E_q$. The restructured graph~$G^*$ is formed with the same left and right partitions~$U$ and~$W$ of the graph~$G$, and a third partition~$Z$ which is a set of additional vertices associated with each of the $\delta$-cliques that \textsf{CPGR} extracts.  Our main concern is with the edges $(u_i,w_j) \in E^*$ in the restructured graph~$G^*$. $E^*$ contains the edges that replace the $\delta$-clique~$C_q$ by adding the additional vertex~$z_q \in Z$, and the set of edges $\hat{E} = E \setminus E_q$ which were not part of the $\delta$-clique~$C_q$. 
It can be easily seen that~$\hat{E} \subset E$, that is, the edges in~$\hat{E}$ are edges in~$G$, and the remaining edges $E^* \setminus \hat{E}$ in $E^*$ are the edges that replace~$C_q$ in~$G^*$. Each edge $(u_i,w_j) \in E_q$, is replaced by two edges, $(u_i,z_q)$ and $(z_q,w_j)$, where $u_i \in U$, $w_j \in W$, and $z_q \in Z$. Therefore, for each edge $(u_i,w_j) \in E_q$, there exists a path from~$u_i$ to~$w_j$ composed of two edges, $(u_i,z_q)$ and $(z_q,w_j)$, that passes through the additional vertex~$z_q \in Z$, thus preserving the path information. This holds true for all the $\delta$-cliques extracted by \textsf{CPGR}.
\end{proof}

In the following, we determine an upper bound on the number of edges in the restructured graph obtained by \textsf{CPGR}. First, we state a theorem from~\cite{federMotwani} that guarantees the existence of a $\delta$-clique in a bipartite graph. This theorem is necessary in the proof of the bound.

\begin{theorem}
    ${\mathrm{\cite{federMotwani}(Theorem \ 2.2)}}$ Every bipartite graph $G(U,W,E)$ contains a $\delta$-clique.
\end{theorem}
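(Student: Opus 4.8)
The plan is to prove this by the classical double-counting / averaging argument behind Zarankiewicz-type bounds. Write $k = k(n,m,\delta)$ and $s=\lceil n^{1-\delta}\rceil$; the goal is to produce $U'\subseteq U$ and $W'\subseteq W$ with $|U'|=s$, $|W'|=k$, every vertex of $U'$ adjacent to every vertex of $W'$. I would count the pairs $(u,S)$ with $u\in U$ and $S$ a $k$-subset of the neighbourhood $N(u)\subseteq W$; their number equals $\sum_{u\in U}\binom{d(u)}{k}$, where $d(u)=|N(u)|$ and $\binom{d}{k}=0$ for $d<k$. Since there are only $\binom{n}{k}$ possible sets $S$, by pigeonhole some particular $k$-set $W'$ is counted with at least $\big(\sum_{u}\binom{d(u)}{k}\big)/\binom{n}{k}$ vertices $u$; all such $u$ are common neighbours of $W'$, so together with $W'$ they span a complete bipartite subgraph, and it remains only to show this number is at least $s$.

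For the lower bound I would first use convexity: extending $x\mapsto\binom{x}{k}$ by $0$ on $[0,k-1)$ gives a convex function on $[0,\infty)$, so Jensen's inequality together with $\sum_u d(u)=m$ yields $\sum_u\binom{d(u)}{k}\ge n\binom{m/n}{k}$. Hence it suffices to prove $n\,\binom{m/n}{k}\big/\binom{n}{k}\ge n^{1-\delta}$, i.e. $\prod_{j=0}^{k-1}\frac{m/n-j}{\,n-j\,}\ge n^{-\delta}$. Here I would bound each factor below by $\frac{m}{2n^2}$: an elementary manipulation shows $\frac{m/n-j}{n-j}\ge\frac{m}{2n^2}$ as soon as $j\le \frac{m}{2n}$, and this covers every $j\le k-1$ because $k$ is only logarithmic while $m/n$ is polynomial in $n$ — indeed $k\ge 2$ already forces $\log(2n^2/m)\le\tfrac12\log n$, i.e. $m\ge 2n^{3/2}$, and then $k\le\log_2 n\le\tfrac{m}{2n}+1$. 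Consequently $\prod_{j=0}^{k-1}\frac{m/n-j}{n-j}\ge\big(\tfrac{m}{2n^2}\big)^{k}=(2n^2/m)^{-k}$, and $(2n^2/m)^{-k}\ge n^{-\delta}$ is equivalent to $k\log(2n^2/m)\le\delta\log n$, which holds precisely because $k=\big\lfloor\frac{\delta\log n}{\log(2n^2/m)}\big\rfloor$. Thus $W'$ has at least $n^{1-\delta}$, hence (being an integer) at least $s$, common neighbours, and selecting any $s$ of them finishes the proof.

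Two clean-up remarks I would add. First, the degenerate regimes: if $k=0$ a $\delta$-clique has empty right part and the statement is vacuous; if $k=1$ any vertex of positive degree serves (and $k\ge 1$ forces $m\ge 2n^{2-\delta}\ge 1$), and the finitely many small $n$ for which the estimate $k\le\log_2 n\le\frac{m}{2n}+1$ is not yet comfortable can be checked by hand. Second, this existence proof can be turned into the extraction procedure the algorithms use: rather than inspecting all $\binom{n}{k}$ subsets, one builds $W'$ one bit at a time, at each of the $\log n$ levels of a neighbourhood tree retaining the half of the surviving candidates that keeps the larger common neighbourhood — this is exactly the \textsf{FM} selection step, and \textsf{CPGC}'s degree-sorted selection is a cheaper surrogate for it.

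The step I expect to be the real obstacle is the arithmetic bundled into the second paragraph: making the per-factor estimate $\frac{m/n-j}{n-j}\ge\frac{m}{2n^2}$ rigorous needs the (not entirely obvious) interplay between $k$ being at most logarithmic and $m$ being at least $\Omega(n^{3/2})$ in the non-trivial regime; one must also be careful that $\binom{m/n}{k}$ is the real-variable binomial whereas $\binom{d(u)}{k}$ is the integer one, so the convexity step must be phrased for the $0$-extended polynomial, and that it is the floor in the definition of $k(n,m,\delta)$ that makes the final inequality $k\log(2n^2/m)\le\delta\log n$ an immediate consequence of the definition rather than something to be verified separately.
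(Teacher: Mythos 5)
The paper offers no proof of this statement --- it is imported verbatim as Theorem 2.2 of Feder and Motwani --- so there is nothing in-paper to compare against; judged on its own, your double-counting argument is correct and is essentially the argument behind the cited theorem (Feder and Motwani count \emph{ordered} $k$-subsets, whence the falling factorials $d_i^{[k]}$ appearing in the appendix's description of their selection rule, while you count unordered ones; the difference is cosmetic). Your arithmetic in the second paragraph is sound: $k\ge 2$ forces $m\ge 2n^{3/2}$, and $k\le \log_2 n\le \sqrt{n}\le m/(2n)$ holds for every $n\ge 1$, so the per-factor bound $\frac{m/n-j}{n-j}\ge \frac{m}{2n^2}$ needs no exceptional small cases to be checked by hand. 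The one flaw is the side remark that for $k=1$ ``any vertex of positive degree serves'': the left partition must still contain $\lceil n^{1-\delta}\rceil$ common neighbours of that vertex, so positive degree alone does not suffice when $\delta<1$; however, the $k=1$ case is already covered by your main computation (the single factor is $m/n^2\ge m/(2n^2)$ and the floor in the definition of $k$ gives $k\log(2n^2/m)\le\delta\log n$), so that remark is superfluous rather than load-bearing.
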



\begin{lemma}
\label{min_m_hat}
Given a graph $G(U,W,E)$, where $n = |U| = |W|$, $m = |E|$, and a constant $\delta$, $0 < \delta < 1$, if $m < 2n^{2-\frac{\delta}{2}}$ then extracting $\delta$-cliques and replacing them with tripartite graphs as done in {\normalfont \textsf{CPGR}} does not lead to a reduction in the number of edges of~$G$.
\end{lemma}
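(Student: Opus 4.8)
The plan is to show that the hypothesis $m < 2n^{2-\delta/2}$ forces $\hat k := k(n,m,\delta) = \big\lfloor \tfrac{\delta \log n}{\log(2n^2/m)} \big\rfloor \le 1$. Once this is established the conclusion is immediate: the \textsf{while} loop of \textsf{CPGC} (Line~6) is guarded by the condition $\hat k > 1$, so with $\hat k \le 1$ it is never entered, \textsf{CSA} is never invoked, the clique set $\mathcal C$ stays empty, and Lines~14--15 merely copy every edge of $G$ into $G^*$ as a trivial clique; hence $m^* = m$ and the compression ratio is exactly $1$, i.e. no compression is obtained.

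To get $\hat k \le 1$ I would argue by a chain of equivalences, all legitimate because $G$ is bipartite with $n$ vertices per side, so $m \le n^2$, which gives $2n^2/m \ge 2$ and therefore $\log(2n^2/m) > 0$; this positivity is what lets us cross-multiply. Since $\lfloor x\rfloor \ge 2 \iff x \ge 2$, we have $\hat k \ge 2$ iff $\delta \log n \ge 2\log(2n^2/m)$, iff (exponentiating both sides) $n^\delta \ge 4n^4/m^2$, iff $m^2 \ge 4n^{4-\delta}$, iff
\[ m \ge 2 n^{2-\delta/2}. \]
Taking the contrapositive, $m < 2n^{2-\delta/2}$ yields $\hat k \le 1$, which is exactly what the first paragraph requires.

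I expect no genuine obstacle here: the whole argument is a single monotonicity computation. The only points that need care are (i) verifying $\log(2n^2/m) > 0$ before cross-multiplying — this is precisely where the bipartite bound $m \le n^2$ is used — and (ii) handling the floor correctly, i.e. $\lfloor x\rfloor \le 1 \iff x < 2$, so that the threshold comes out as the stated \emph{strict} inequality $m < 2n^{2-\delta/2}$ rather than a non-strict one. One may optionally also remark that replacing a degenerate ``clique'' with right partition of size $\hat k \le 1$ by a tripartite gadget would introduce the new vertex $z_q$ and hence fail to decrease (in fact increase) the edge count, which is the reason \textsf{CPGC} stops at $\hat k = 1$; but this observation is not needed for the statement.
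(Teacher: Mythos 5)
Your proposal is correct, and it is essentially the paper's argument with the emphasis inverted. The paper's proof simply \emph{asserts} that $m < 2n^{2-\delta/2}$ forces $k=1$ and then spends its effort on the consequence: a ``clique'' with right partition of size $1$ has $|U_q|$ edges that get replaced by $|U_q|+1$ edges, so the replacement strictly increases the edge count. You instead make the threshold computation the centerpiece --- the chain $\hat k \ge 2 \iff \delta\log n \ge 2\log(2n^2/m) \iff m^2 \ge 4n^{4-\delta} \iff m \ge 2n^{2-\delta/2}$, with the needed positivity $\log(2n^2/m)>0$ justified by $m \le n^2$ --- and then conclude via the guard $\hat k>1$ on the \textsf{while} loop that no clique is ever extracted; the paper's edge-counting observation appears only as your optional closing remark. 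Both routes are valid, and yours actually supplies the one step the paper leaves unproved (and is slightly more careful in allowing $\hat k \le 1$ rather than insisting $k=1$), so nothing is missing; if anything, keeping the edge-count remark is worthwhile since it explains \emph{why} the algorithm's stopping rule is the right one rather than merely that it fires.
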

\begin{proof}
    In the $\delta$-clique based graph restructuring performed by \textsf{CPGR}, the restructured graph $G^*(U,W,Z,E^*)$ is obtained from~$G$ by replacing $|E_q| = |U_q \times K_q|$ edges in a $\delta$-clique~$C_q$ with $|E^*| = |U_q| + |K_q|$ edges in $G^*$ and adding an additional vertex $z_q \in Z$ for each extracted $\delta$-clique~$C_q$. 
    The size of the right partition of~$C_q$ is $|K_q|={k} = k(n,m, \delta)=\Big\lfloor{\frac{\delta \log n}{\log(2n^2/m)} }\Big\rfloor$. If $m < 2n^{2-\frac{\delta}{2}}$ then the size of the right partition~$|K_q|$ is $k = 1$. Therefore, the number of edges in the $\delta$-clique is $|E_q| = |U_q| \cdot 1 = |U_q|$. Those edges are replaced by $ |E^*|  = |U_q| + \hat{k} = |U_q| + 1$ edges in~$G^*$. Thus, replacing the $\delta$-clique in~$G$ with $m < 2n^{2-\frac{\delta}{2}}$, the number of edges in $G^*$ would actually increase by~1, which does not lead to a restructuring of~$G$. 
\end{proof}

\begin{theorem}
\label{comp_ratio}
    Let $G(U,W,E)$ be any bipartite graph with $|U| = |W| = n$ and $|E| = m > 2n^{2-\frac{\delta}{2}}$, where~$\delta$ is a constant such that $0 < \delta < 1$. Then, the number of edges in the restructured graph $G^*(U,W,Z,E^*)$ obtained by {\normalfont \textsf{CPGR}} is $|E^*| = O\left(\frac{m}{k(n,m,\delta)}\right)$.
\end{theorem}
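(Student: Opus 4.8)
The plan is to bound $|E^*|$ by accounting separately for the three types of edges in the compressed graph: (i) the edges internal to the extracted $\delta$-cliques, which get replaced by the "star" edges through the new vertices $z_q$; (ii) the edges added to connect $z_q$ to the two partitions of each clique; and (iii) the trivial edges left over in $G$ that are not covered by any $\delta$-clique. For a clique $C_q = (U_{K_q}, K_q)$ with $|K_q| = \hat{k}$ and $|U_{K_q}| \ge \lceil n^{1-\delta}\rceil$, the compression replaces $|U_{K_q}|\cdot\hat{k}$ edges with $|U_{K_q}| + \hat{k}$ edges. The key quantitative observation is that the number of edges \emph{removed} per clique is $|U_{K_q}|\cdot\hat{k} \ge \lceil n^{1-\delta}\rceil \cdot \hat{k}$, while the number \emph{added} per clique is $|U_{K_q}| + \hat{k} \le |U_{K_q}| + |U_{K_q}| = 2|U_{K_q}|$ (using $\hat k \le n^{1-\delta} \le |U_{K_q}|$ for the relevant regime), so added edges are at most a $2/\hat{k}$ fraction of removed edges for that clique.

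First I would set up the bookkeeping: let $C_1,\dots,C_Q$ be all extracted $\delta$-cliques, let $R = \sum_{q=1}^{Q} |U_{K_q}|\,\hat{k}_q$ be the total number of edges removed (here $\hat k_q$ is the value of $\hat k$ when $C_q$ was extracted), and let $m_{\mathrm{triv}} = m - R$ be the number of trivial edges surviving into $G^*$. Then
\begin{equation*}
|E^*| \;=\; m_{\mathrm{triv}} \;+\; \sum_{q=1}^{Q}\bigl(|U_{K_q}| + \hat{k}_q\bigr)
\;\le\; (m - R) \;+\; \sum_{q=1}^{Q} 2|U_{K_q}|.
\end{equation*}
Next I would control $\sum_q 2|U_{K_q}|$ in terms of $R$. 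Since $\hat{k}_q \ge \hat{k}_{\min}$ where $\hat{k}_{\min}$ is the final value of $\hat{k}$ before the while loop terminates, and since Lemma~\ref{min_m_hat} together with the hypothesis $m > 2n^{2-\delta/2}$ guarantees that the loop actually runs with $\hat k \ge 2$ (so non-trivial cliques are extracted), we get $2|U_{K_q}| = \frac{2}{\hat k_q}\,|U_{K_q}|\hat k_q \le \frac{2}{\hat{k}_{\min}}\,|U_{K_q}|\hat k_q$, hence $\sum_q 2|U_{K_q}| \le \frac{2}{\hat{k}_{\min}} R \le \frac{2}{\hat{k}_{\min}} m$. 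The remaining task is to relate $\hat{k}_{\min}$ to $k(n,m,\delta)$: since $\hat m$ only decreases and $k(n,\hat m,\delta)$ is monotone in $\hat m$, and since the loop terminates precisely when $\hat k$ would drop to $1$, one argues that $\hat k_{\min} = \Theta(k(n,m,\delta))$ — more carefully, that $k(n,m,\delta)$ and $\hat k_{\min}$ differ by at most a constant factor, using the logarithmic form of $k(n,m,\delta) = \lfloor \delta\log n / \log(2n^2/m)\rfloor$ and the fact that $m$ stays within a polynomial range of its final value. Combining, $|E^*| \le m - R + \frac{2}{\hat k_{\min}} m = O\!\left(\frac{m}{k(n,m,\delta)}\right)$, provided one also shows $m - R = O(m/k(n,m,\delta))$, i.e. that almost all edges get absorbed into non-trivial cliques.

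The main obstacle I anticipate is exactly that last point: showing the number of surviving \emph{trivial} edges $m_{\mathrm{triv}} = m - R$ is itself $O(m/k(n,m,\delta))$ rather than a constant fraction of $m$. This requires an argument that the while loop keeps extracting cliques until the edge count has dropped below the threshold of Lemma~\ref{min_m_hat} (roughly $2n^{2-\delta/2}$), and then a comparison showing $n^{2-\delta/2}$ is $O(m/k(n,m,\delta))$ when $m > 2n^{2-\delta/2}$ — this hinges on $k(n,m,\delta)$ being small (at most $O(\log n)$) while the gap between $m$ and $n^{2-\delta/2}$ can be leveraged, or alternatively on directly showing $m/n^{2-\delta/2} \le m/(\text{something}) $ of the right order. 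I would handle this by invoking Lemma~\ref{min_m_hat}: when the loop stops, either $\hat k = 1$ (forced) or effectively $\hat m \le 2n^{2-\delta/2}$; in the first case I show $\hat k=1$ already forces $\hat m = O(n^{2-\delta/2})$ by inverting the formula for $k$, and then bound $n^{2-\delta/2}$ against $m/k(n,m,\delta)$ using $m > 2n^{2-\delta/2}$ and $k(n,m,\delta) = O(\log n)$, which should give the claimed asymptotic after a short calculation. The cleanest version of the whole proof is likely to combine the "added edges $\le \frac{2}{\hat k} \cdot$ removed edges" bound with a single application of Lemma~\ref{min_m_hat} to say the residual graph has $O(n^{2-\delta/2}) = O(m/k(n,m,\delta))$ edges, and I would present it in that order.
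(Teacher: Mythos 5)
Your overall decomposition --- charging the edges added to $G^*$ against the edges removed from $G$ at a rate of roughly $1/\hat{k}_q + 1/|U_{K_q}|$ per removed edge, and then separately bounding the leftover trivial edges via Lemma~\ref{min_m_hat} --- is the same skeleton as the paper's proof, and your treatment of the trivial residue (showing $2n^{2-\delta/2} = O(m/k(n,m,\delta))$ using $m > 2n^{2-\delta/2}$ and $k(n,m,\delta) = O(\log n)$) is correct in spirit and does work out. The genuine gap is in the central step, where you replace every $\hat{k}_q$ by the single uniform lower bound $\hat{k}_{\min}$ and then assert $\hat{k}_{\min} = \Theta(k(n,m,\delta))$. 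That assertion is false. Take $m = n^2$: then $k(n,m,\delta) = \big\lfloor \delta\log n/\log 2\big\rfloor = \Theta(\log n)$, but the while loop keeps extracting cliques until $\hat{m}$ falls to about $2n^{2-\delta/2}$, at which point $\hat{k} = \big\lfloor \delta\log n/\log(n^{\delta/2})\big\rfloor = 2$. So $\hat{k}_{\min}=2$ while $k(n,m,\delta)=\Theta(\log n)$: the ``polynomial range'' of $\hat{m}$ is converted, through the $\log(2n^2/\hat{m})$ in the denominator of $k$, into a $\Theta(\log n)$-factor change in $\hat{k}$, not a constant-factor one. Consequently your bound $\sum_q 2|U_{K_q}| \leq \tfrac{2}{\hat{k}_{\min}}\,m$ only yields $O(m)$, which is strictly weaker than the claimed $O\!\left(m/k(n,m,\delta)\right)$.

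The missing idea is a \emph{weighted} accounting that exploits the fact that most edges are removed while $\hat{k}$ is still large, rather than a worst-case bound by the terminal $\hat{k}$. The paper partitions the execution into dyadic stages, stage $i$ covering the interval in which $\hat{m}$ lies between $m/2^{i}$ and $m/2^{i-1}$; in stage $i$ it holds that $\hat{k} \geq k_i$ with $1/k_i \leq 2/k + 2i/(\delta\log n)$, and at most $2m/2^{i}$ edges can be removed during that stage. Summing $\left(1/k_i + 1/n^{1-\delta}\right)\cdot 2m/2^{i}$ over $i$ and using the convergence of $\sum_i i/2^{i}$ gives $m^* = O\!\left(m/k + m/\log n + m/n^{1-\delta}\right) = O(m/k)$. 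Some such stage-by-stage weighting (or an equivalent integral over the decreasing edge count) is necessary; without it the uniform bound cannot recover the stated result.
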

\begin{proof}
    We follow the basic idea of the proof from Theorem 2.4 in \cite{federMotwani} and extend it to apply to the \textsf{CPGR} algorithm. We assume that initially the restructured graph $G^*$ has 0 edges, i.e.,~$|E^*| = 0$, and that edges are added to~$E^*$ as the algorithm progress. To estimate the number of edges in~$E^*$, we divide the iterations of \textsf{CPGR} into stages, where each stage~$i$ consists of extracting one or more~$\delta$-cliques with a fixed~$\hat{k}_i$. Therefore, the~$i^{th}$ stage includes all $\delta$-cliques extracted after the number of edges in~$G$ becomes less than~$m/2^{i-1}$ for the first time, and before the number of edges in~$G$ becomes less than~$m/2^{i}$ for the first time. For stage~$i$,  $\hat{k}$ is always going to be at least
     $   k_i =  \frac{\delta \log n}{2 \log(2^i \cdot 2n^2/m)}$.
    
    Assume that a clique~$C_q$ with right partition~$K_q \subseteq W$ and left partition~$U_{K_q} \subseteq U$ is extracted in stage~$i$. Extracting~$C_q$ removes $|K_q \times U_{K_q}|$ edges from $G$ and adds $|K_q| + |U_{K_q}|$ edges in~$E^*$. Therefore, the average number of edges added in~$E^*$ for each edge removed from~$G$ by extracting~$C_q$ is $\rho = \frac{|K_q|+|U_{K_q}|}{|K_q \times U_{K_q}|} = \frac{1}{|K_q|} + \frac{1}{|U_{K_q}|}$. From the definition of a $\delta$-clique, during stage~$i$ we have $|K_q| = \hat{k} \geq k_i$ and $|U_{K_q}| \geq n^{1-\delta}$, and therefore, $\rho \leq \frac{1}{k_i} + \frac{1}{n^{1-\delta}}$.
    The total number of edges removed from~$G$ in stage~$i$ cannot be greater than $2m/2^i$, therefore, the number of edges added to~$E^*$ during stage~$i$ is less than or equal to $\rho \frac{2m}{2^i} = \left(\frac{1}{k_i} + \frac{1}{n^{1-\delta}}\right)\frac{2m}{2^i}$.  

    \textsf{CPGR} terminates extracting cliques when the number of remaining edges in~$G$ is less than~$2n^{2 - \frac{\delta}{2}}$. This is to eliminate the extraction of trivial cliques, as shown in Lemma~\ref{min_m_hat}.
    Next, we determine an upper bound on the total number of edges~$m^*$ added by \textsf{CPGR} to~$G^*$ before it terminates extracting $\delta$-cliques (i.e., when $m < 2n^{2 - \frac{\delta}{2}}$). 
    Therefore, 
    \begin{gather}
    \begin{split}
        m^* \leq \sum_{i=1}^{\big\lceil\log\left(\frac{m}{2n^{2 - \delta/2}}\right)\big\rceil} \bigg(\frac{1}{k_i} + \frac{1}{n^{1-\delta}}\bigg)\frac{2m}{2^i} 
        \leq \sum_{i=1}^{\infty} \bigg(\frac{1}{k_i} + \frac{1}{n^{1-\delta}}\bigg)\frac{2m}{2^i}.
    \end{split}
\end{gather}

    Since $1/(2k_i) \leq 1/k + i/(\delta\log n)$, we obtain,
    \begin{gather}
    \begin{split}
         m^* \leq \frac{4m}{k} \sum_{i=1}^{\infty} \frac{1}{2^i} + \frac{4m}{\delta \log n} \sum_{i=1}^{\infty} \frac{i}{2^i} + \frac{2m}{n^{1-\delta}} \sum_{i=1}^{\infty} \frac{1}{2^i} 
         \leq 2m \left(\frac{2}{k} + \frac{4}{\delta \log n} + \frac{1}{n^{1-\delta}}\right).
    \end{split}
\end{gather}
    Thus, $m^* = O\left(\frac{m}{k}\right)$. After \textsf{CPGR} finishes extracting cliques, there are $2n^{2-\delta/2}$ edges remaining in~$G$. Those remaining edges are trivial cliques (i.e., single edges) that are added to~$G^*$. The number of remaining edges $2n^{2-\delta/2}$ is in $O\left(\frac{m}{k}\right)$. Therefore, the number of edges in the restructured graph $G^*(U,W,Z,E^*)$ obtained by \textsf{CPGR} is $|E^*| = O\left(\frac{m}{k}\right)$.
\end{proof}

\vspace*{0.02cm}
\noindent {\textbf{Extension to non-bipartite graphs.}} 
\textsf{CPGR} can be extended to restructure non-bipartite graphs using a similar technique to that described in~\cite{federMotwani}. Consider a directed graph $H(V, E)$ with $n$ vertices and~$m$ edges. Initially, we transform $H$ into a bipartite graph $G\left(L, R, E^{\prime}\right)$, where each vertex $v \in V$ is duplicated into $v_L \in L$ and $v_R \in R$. For each directed edge $(u, v)$ in $E$, we create a directed edge $\left(u_L, v_R\right)$ in~$E^{\prime}$, with the direction from~$L$ to~$R$. Furthermore, for each vertex $v \in V$ we add a directed edge from~$v_R$ to~$v_L$. The proposed graph restructuring algorithm is then applied to the bipartite graph~$G$, which includes only the edges from~$L$ to~$R$. Once the restructured  graph is computed we add to it the~$n$ directed edges from~$u_R$ to~$u_L$, corresponding to each vertex $u\in V$. 
Since the path information of the original graph is preserved by this transformation, it allows different graph algorithms to work on the restructured graph as well. 
Undirected general graphs can be transformed first into directed graphs by simply replacing each undirected edge $(u,v)$ in the original graph by two directed edges $(u,v)$ and $(v,u)$, and then applying the technique presented above to transform the directed graph into a bipartite graph. 
The time complexity and the compression ratio of the proposed algorithms are the same as in the case of the bipartite graphs. 
The time complexity and the reduced number of edges of \textsf{CPGR} are the same as in the case of the bipartite graphs. A similar approach for converting a general graph to a bipartite graph and then restructuring the obtained bipartite graph was employed in~\cite{1scalable,compressing_bisection}.

\section{Experimental Results}
\label{sec: experimental results}

We evaluated the running time and the \emph{reduction in the number of edges}, $R$ of \textsf{CPGR} by comparing it with the \textsf{FM} algorithm~\cite{federMotwani}.
Let $m = |E|$ be the number of edges in the given graph $G$ and $m^* = |E^*|$ be the number of edges in the restructured graph $G^*$ and $m^* \leq m$, then we define the reduction in the number of edge as $R = m - m^*$.
Due to the lack of an existing implementation, we implemented \textsf{FM} in C. However, its vertex selection calculations encountered machine representation limits on large graphs, restricting our comparative analysis to small bipartite graphs (up to $n = 128$ vertices per partition, $\sim$16k edges). To ensure a fair runtime comparison, we excluded trivial clique extraction ($\hat{k} = 1$) for both algorithms (\textsf{CPGR} inherently does this as per Algorithm~\ref{alg:CPGR}, Line~10). 
We further assessed \textsf{CPGR} on large bipartite graphs (up to 1.05 billion edges) and real-world datasets from the SuiteSparse Matrix Collection~\cite{SuiteSparse}. Additionally, we tested the utility of our restructured graph by applying the Dinitz algorithm for maximum bipartite matching to large bipartite graphs and the APSP algorithm on real-world datasets.
All algorithms, 
were implemented in C  and tested on a Linux system (AMD EPYC-74F3, 3.2GHz, 1 core, 128 GB RAM) using GCC 8.5.0. The implementation is available at~GitHub~\cite{github}.


Our experiments involved synthetic bipartite graphs generated using adapted $G(n,p)$ random graph model with $|U| = |W| = n$, $n \in [32, 32768]$, densities $p \in [0.8, 0.98]$, up to 1.05 billion edges and general graphs with standard $G(n,p)$ with $n = 32000$, densities $p \in [0.4, 0.7]$, $\sim$409 to 716 million edges, with 10 instances per configuration. We also used real-world sparse graphs from SuiteSparse~\cite{SuiteSparse}. For each instance, we ran \textsf{CPGR} with six $\delta$ values [0.5 to 1] and show average results and standard deviations for runtime and reduction in number of edges $R$, as well as the performance of the application algorithms on the restructured graphs. \looseness -1

\subsection{Results for Small Bipartite Graphs}

\begin{figure}[h]
     \centerline{
    \includegraphics[width=\linewidth]{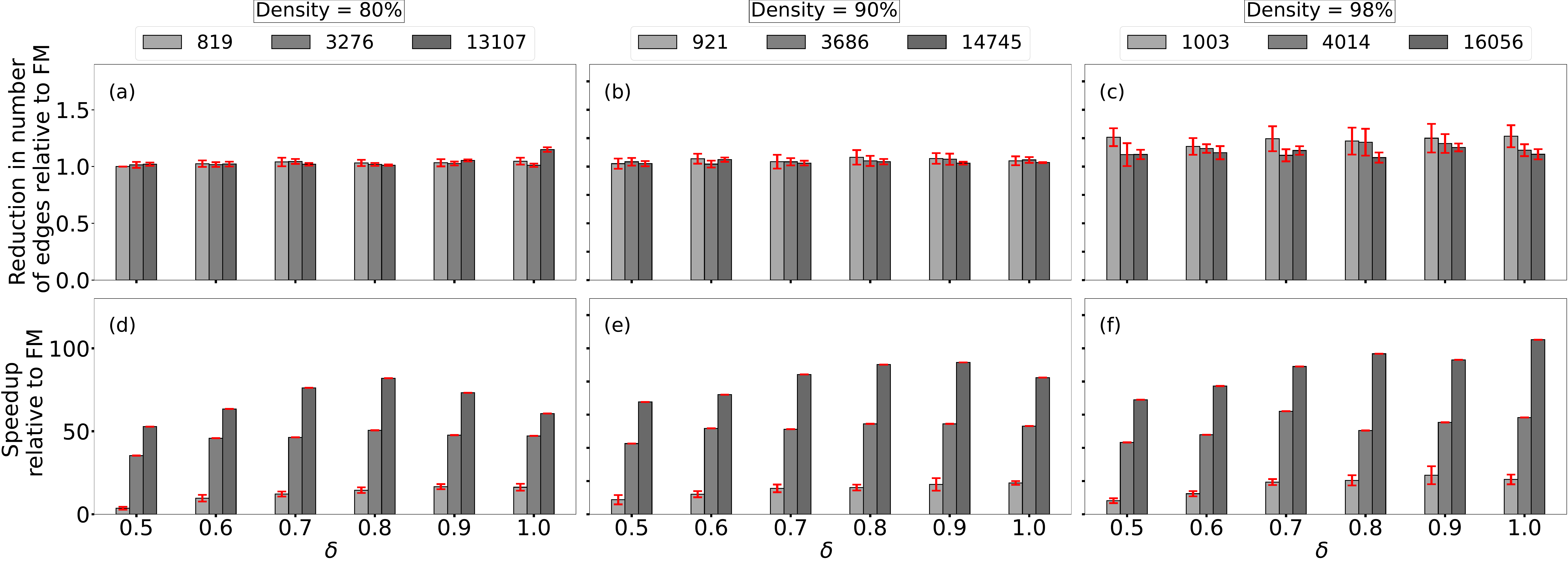}
    \label{fig:relative_comparision}}
  \caption{\textsf{CPGR} vs.\ \textsf{FM}: Reduction in number of edges  relative to \textsf{FM} (top row: (a), (b) and (c)) and speedup relative to \textsf{FM} (bottom row: (d), (e) and (f)) for graphs with 819 to 16~thousand edges.} 
  \label{fig:relative_CR_speedup}
\end{figure}

In this section, we compare the performance of \textsf{CPGR} against \textsf{FM} on small bipartite graphs, particularly, with~$n = 2^i$ vertices in each bipartition, where~$i$ ranges from~5 to~7 and the number of edges ranging from 819 to 16,056. However, due to the limitations of \textsf{FM} as previously mentioned, our comparison is constrained to $n = 2^7$ and $m = 16,056$. We compare the performance of the two algorithms using two metrics, speedup, and  $R$ relative to \textsf{FM}. \looseness=-1

\vspace*{0.1cm}
\noindent {\bf Reduction in number of edges $R$ relative to \textsf{FM}.} We define the \emph{edge reduction ratio relative to} \textsf{FM} as the ratio of the $R$ achieved by \textsf{CPGR} to that of \textsf{FM}.
Figures~\ref{fig:relative_CR_speedup}a-\ref{fig:relative_CR_speedup}c illustrate the relative~$R$ of \textsf{CPGR} compared to \textsf{FM} for small bipartite graphs across different $\delta$ values (0.5 to 1) and densities (0.80, 0.90, 0.98). In all these cases, the relative $R$ ranges from 1 to 1.27, indicating that \textsf{CPGR} achieves at least the same, and often a greater reduction than \textsf{FM}. This improved performance of \textsf{CPGR} stems from its ability to extract multiple $\delta$-cliques before  $\hat{k}$ decreases (as shown in Figure~\ref{fig:motivation}b), allowing for the removal of more edges at higher $\hat{k}$ values (Section~\ref{subsec:S-CPGR}).
Notably, for a graph with $n=32$, $m=819$, a density of 0.80, and $\delta$=0.5, \textsf{CPGR} achieves an $R$ equivalent to \textsf{FM}. This is because the small size of the graph limits the number of iterations required for $\delta$-clique extraction, thus not providing \textsf{CPGR} with a significant advantage over \textsf{FM}. However, for the same $n=32$ and $\delta$, increasing the graph density leads to a higher $R$ for \textsf{CPGR}. For instance, when $m$ increases from~921 to~1,003, the relative $R$ rises from 1.03 to 1.25. In contrast, for a graph $n=128$, $m=16056$, density = 0.98, we observe a non-monotonic trend in the relative $R$ as $\delta$ increases. Initially, the ratio increases from~1.1 to~1.14 as $\delta$ goes from 0.5 to 0.7. However, further increasing~$\delta$ to~1 results in a decrease to~1.1. This reduction is to the fact that a larger $\delta$ increases the size of the right partition $K_q$ of a $\delta$-clique (since $|K_q| = \hat{k}$ and $\hat{k} \propto \delta$),
consequently decreasing the likelihood of finding common neighbors. \looseness=-1

It is important to recognize that the optimal $R$ is influenced by the interplay between $n, m,$ and~$\delta$. Identifying the specific combinations of these parameters that yield the highest reduction in the number of edges~$R$ remains a subject for future investigation.
\looseness=-1

\vspace*{0.1cm}
\noindent {\bf Speedup relative to \textsf{FM}.} 
We define the \emph{speedup relative} to \textsf{FM} as the ratio of the runing time of \textsf{FM} over that of \textsf{CPGR}. Figures~\ref{fig:relative_CR_speedup}d-f illustrate the speedup achieved by \textsf{CPGR} for bipartite graphs with $n$ = 32, 64, and 128 vertices in each partition, $m$ = 819 to 16 thousand for densities of 0.80, 0.90, and 0.98, and $\delta$ values ranging from 0.5 to 1. The running time of \textsf{CPGR} is consistently lower than that of \textsf{FM}, resulting in speedup by \textsf{CPGR} across all cases. This is primarily due to the selection of vertices in the while loop (Lines 4-6) in \textsf{CSA}, facilitating greater~$R$ by extracting more than one (i.e., $\gamma$), $\delta$-cliques in a single iteration, as shown in Figure~\ref{fig:motivation}b. The speedup increases notably with higher $\delta$ and density. For example, with $n = 32$ and $m = 819$ for density 0.8, \textsf{CPGR} achieves an average speedup of~3.66 for $\delta = 0.5$, increasing to~16.28 for $\delta = 1$. Similarly, with $n = 32$ and $\delta = 1$, \textsf{CPGR} achieves an average speedup ranging from 16.28 for $m = 819$ with density 0.8 to 20.97 for $m = 16$ thousand with density 0.98. It is worth noting that for small graphs, certain combinations of $n$, $m$, and $\delta$ yield higher speedup values. For instance, with $n = 128$ and $m = 819$ for density 0.8, the speedup increases from 52.81 to 81.97 for $\delta = 0.8$, and then slightly decreases to 60.68 for $\delta = 1$.
\looseness=-1

\subsection{Results for Large Bipartite Graphs}

We present the results obtained from testing \textsf{CPGR} on large bipartite graphs, where each bipartition consists of~$n=2^i$ vertices, with~$i$ ranging from~11 to~15, and a number of edges~$m$ ranging from approximately 3.36~million to 1.05~billion.   \looseness=-1

\begin{figure}[!tb]
  \centerline{
  \includegraphics[width=\textwidth]{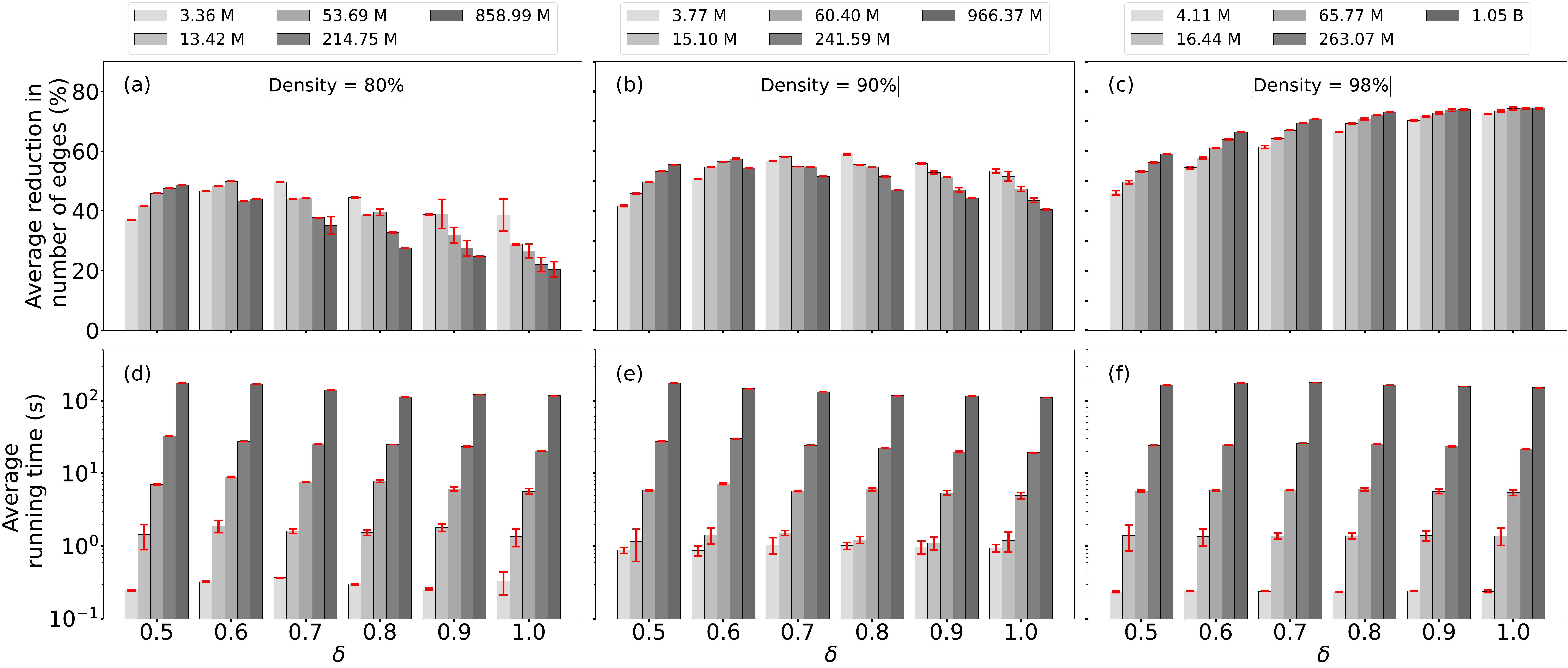}
  }
  \caption{\textsf{CPGR on Large Graphs}: Average reduction in the number of edges $R^{avg}$ ((a), (b), and (c)) and average running time ((d), (e), and (f)) for graphs with 3.36~million to 1.05~billion edges. \looseness -1} 
 \label{fig:combined_CPGR_results}
\end{figure}

\vspace*{0.1cm}
\noindent {\bf Average reduction in the number of edges ($R^{avg}$).} Figures~\ref{fig:combined_CPGR_results}a-\ref{fig:combined_CPGR_results}c illustrate $R^{avg}$  for \textsf{CPGR}, defined as the percentage of edges removed from the original graph to obtain the restructured graph, i.e., $R^{avg}=\frac{m - m^*}{m} \cdot 100$.
We observe a consistent trend: for fixed $n$ and $\delta$, $R^{avg}$ improves with increasing density. For example, when $m = 214.75$ million edges at a density of $0.80$ and $\delta = 0.5$,  $R^{avg}$  is $48.72\%$, whereas $R^{avg}$  rises to $59\%$ for $m = 1.05$ billion edges at a density of~$0.98$. This correlation arises because higher density increases the likelihood of finding a large set of common neighbors for the right partition of a $\delta$-clique. Identifying more common neighbors allows \textsf{CPGR} to remove more edges, thereby improving $R^{avg}$. \looseness -1

However, when the density is held constant, $R^{avg}$  exhibits a non-monotonic behavior depending on $\delta$ and $n$. Consider Figure~\ref{fig:combined_CPGR_results}a, for $m = 53.69$ million edges at density $0.80$, $R^{avg}$  initially increases from $45.6\%$ at $\delta = 0.5$ to a peak of $50\%$ at $\delta = 0.6$. Further increasing $\delta$ to $1$, however, leads to a decrease in $R^{avg}$  to $26.5\%$. 
As explained before, this is because a larger $\delta$ increases the size of the right partition~$K_q$ of a $\delta$-clique, which consequently lowers the possibility of finding common neighbors. \looseness -1 

\vspace*{0.1cm}
\noindent {\bf Average running time.} 
For a fixed $\delta$ and density, the running time of \textsf{CPGR} increases with the number of vertices, consistent with its $O(mn^{\delta})$ complexity. Conversely, for a fixed $n$ and density, the running time decreases as $\delta$ increases. For example, at density $0.80$ ($m = 214.75$ million), the runtime drops from $176.27$ s at $\delta = 0.5$ to $117.8$ s at $\delta = 1$. This is because a larger $\delta$ limits the search for common neighbors, leading to lower reduction in the number of edges at lower densities and allowing \textsf{CPGR} to skip more vertices. Consequently, the expected increase in runtime with $\delta$ is not observed. Notably, even at a high density of $0.98$ ($m = 1.05$ billion), the running time still decreases with increasing $\delta$ (from $164.83$ s at $\delta = 0.5$ to $150.97$ s at $\delta = 1$). This is attributed to the extraction of larger $\delta$-cliques at higher densities and $\delta$, which reduces the overall number of iterations in \textsf{CPGR}, thus lowering the runtime. \looseness=-1

\subsection{Results for General Graphs}
\label{subsec:general_graphs}


\begin{figure}[t]
\centering
    \includegraphics[width=\textwidth]{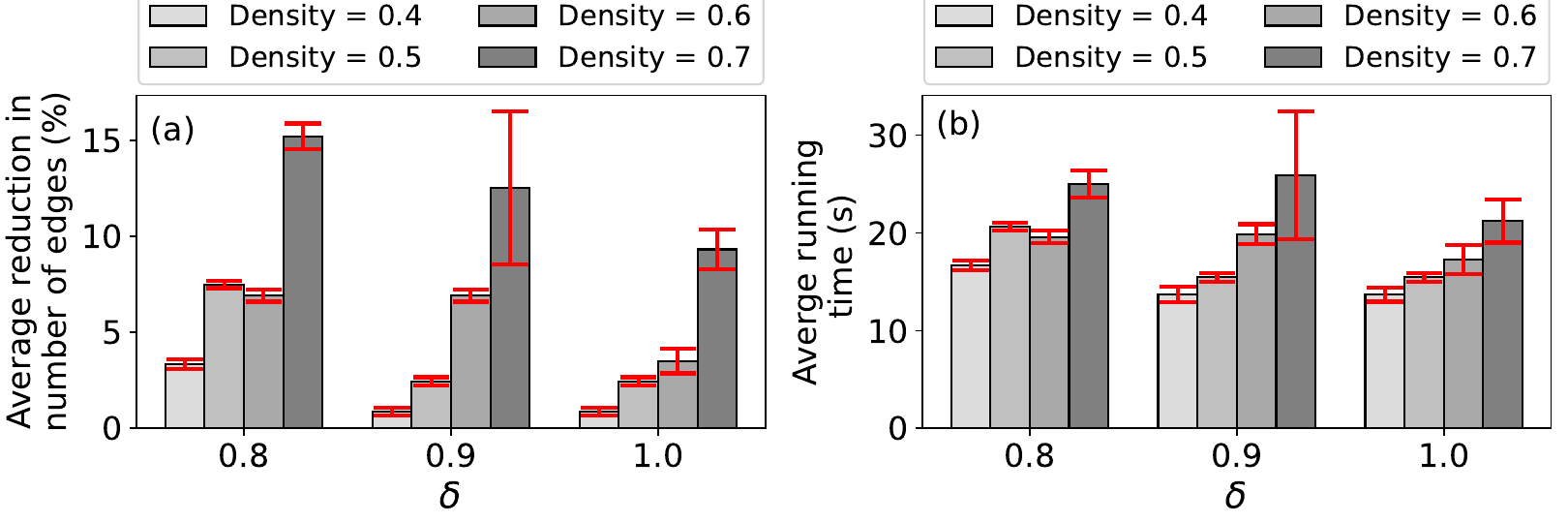}%
\caption{\textsf{CPGR on General Graphs}: (a)  average reduction in number of edges $R^{avg}$ and (b) average running time  for general graphs with 409~million to 716~million edges.}
\label{fig:general_graphs_results}
\end{figure}

In this section, we analyze \textsf{CPGR}'s performance on synthetically generated general graphs with~$n=32000$ vertices and number of edges~$m$ ranging from approximately 409~million to 716~million as explained in Section~\ref{sec: experimental results}. 
We observe that $R^{avg}$ and running time follow the same trend with varying density and $\delta$ as seen in the large bipartite graphs.  \looseness=-1

\vspace*{0.1cm}
\noindent {\bf Average reduction in the number of edges ($R^{avg}$).} Figure~\ref{fig:general_graphs_results}a shows that for a fixed~$\delta$, $R^{avg}$ improves with density. For instance, at $\delta = 0.6$, $R^{avg}$ increases from $12.3\%$ for $m \approx 409$ million at density $0.4$ to $28.6\%$ for $m \approx 716$ million at density $0.7$, consistent with the increased likelihood of finding common neighbors. Similar to large bipartite graphs, $R^{avg}$ is non-monotonic with $\delta$ at a constant density. For $m \approx 716$ million at density $0.7$, $R^{avg}$ peaks at $33.3\%$ for $\delta = 0.6$, after initially increasing from $28.6\%$ at $\delta = 0.5$, before dropping to $9.1\%$ at $\delta = 1$.  \looseness=-1

\noindent {\bf Average running time.}
Running time exhibits a non-monotonic trend with increasing~$\delta$ and density. Larger $\delta$ leads to larger right partitions, reducing the likelihood of finding common neighbors in low-density graphs, thus fewer extracted cliques with larger partitions. Conversely, low $\delta$ (e.g., 0.5) results in smaller right partitions, increasing the chance of finding common neighbors, leading to higher $R$ and longer runtime, as evident in Figure~\ref{fig:general_graphs_results}b. The highest runtime (59.96 s) corresponds to the highest $R^{avg}$ ($33.3\%$) at density $0.7$ and $\delta = 0.6$, while the runtime reduces to~21.21 s when~$\delta = 1$, and $R^{avg}$ drops to $9.1\%$ at the same density.  \looseness=-1

\begin{table}[b]
\centering
\caption{Real-world Datasets (SuiteSparse Matrix Collection \cite{SuiteSparse})} 
\label{tab:dataset_stats} 

\begin{tabular}{c l r r r r}
\toprule
Sr. No & Dataset name & Rows & Columns & Edges & Density \\
\midrule
1 & \textsf{beacxc} & 497 & 506 & 50409 & 0.2 \\
2 & \textsf{beaflw} & 497 & 507 & 53403 & 0.2 \\
3 & \textsf{cari} & 400 & 1200 & 152800 & 0.3 \\
4 & \textsf{mbeacxc} & 496 & 496 & 49920 & 0.2 \\
5 & \textsf{mbeaflw} & 496 & 496 & 49920 & 0.2 \\
\bottomrule
\end{tabular}

\end{table}

\begin{figure*}[t]
\centering
    \includegraphics[width=\textwidth]{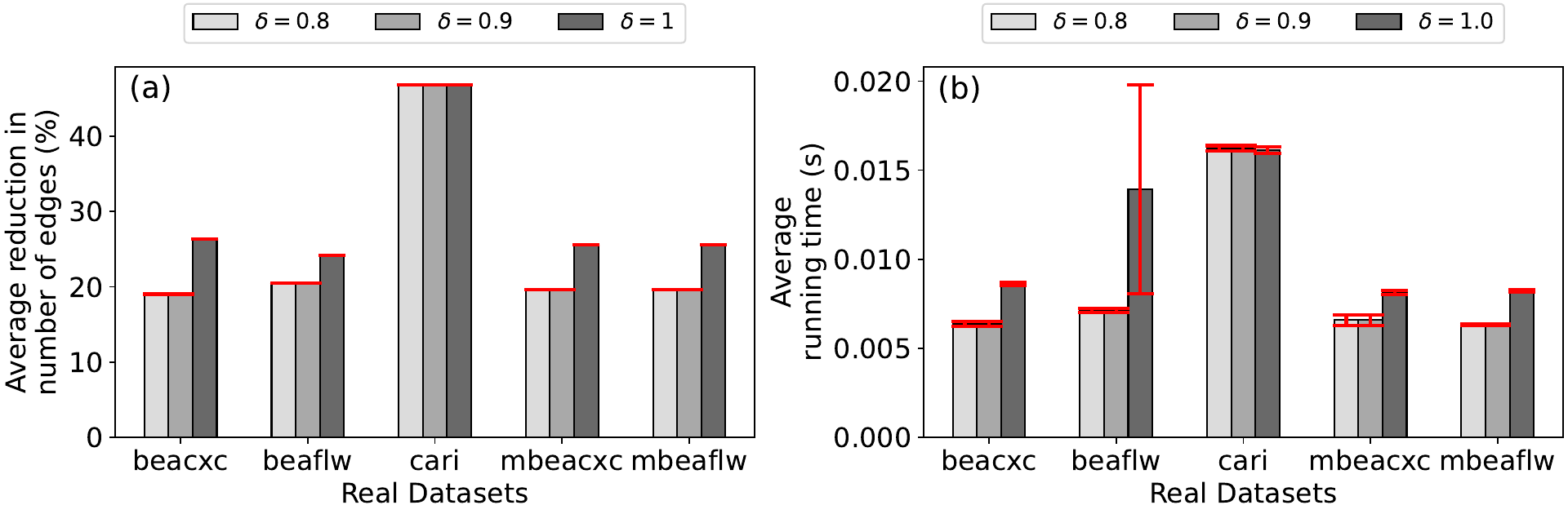}%
\caption{\textsf{CPGR on Real-world graphs}: (a) average   reduction in the number of edges ($R^{avg}$)  and (b) average running time.}
\label{fig:real_graphs_results}
\end{figure*}

\subsection{Results for Real-world Graphs}
\label{sec:real_world_results}
In this section, we analyze \textsf{CPGR} performance on low-density (up to 0.3), symmetric and asymmetric real-world graphs shown in Table~\ref{tab:dataset_stats}. Given that $\delta < 0.8$ would yield $\hat{k} = 1$ and thus no reduction in the number of edges for these sparse graphs, we tested \textsf{CPGR} with $\delta \in \{0.8, 0.9, 1\}$.
\looseness=-1

\noindent {\bf Average reduction in the number of edges ($R^{avg}$).} 
Figure~\ref{fig:real_graphs_results}a demonstrates that \textsf{CPGR} achieves $R^{avg}$ ranging from $18.7\%$ to $46.8\%$ even for the relatively low-density real-world graphs examined. Notably, for dataset \textsf{cari}, $R^{avg}$ remains consistent across all tested $\delta$ values. This stability is due to low density, which limits the maximum value of $\hat{k}$ to 2, thereby resulting in same the $R^{avg}$ regardless of $\delta$. While for other real-world graphs, $R^{avg}$ increases for $\delta$ = 1, this is because, when $\delta=1$, $\hat{k}$ larger for longer iterations of \textsf{CPGR} (as shown in Figure~\ref{fig:motivation}b), resulting in increase in $R^{avg}$. \looseness=-1

\noindent {\bf Average running time.}
The running time of \textsf{CPGR} on the real-world datasets is presented in Figure~\ref{fig:real_graphs_results}b. Similar to $R^{avg}$, the running time generally remains stable for most datasets, which is due to the same reason of low density limiting $\hat{k}$. However, we observe for the datasets that show increase in $R^{avg}$ at $\delta$ = 1, has increase in running time, For example, the average running time for dataset \textsf{beaflw} increases from 0.007 s at $\delta = 0.8$ and $0.9$ to 0.014 s at $\delta = 1$. This increase in running time is because at $\delta = 1$, \textsf{CPGR} could extract larger $\delta$-cliques (with $\hat{k} = 2$), leading to higher $R^{avg}$ and thus a longer running time. 
\looseness=-1

\vspace{-2mm}
\subsection{Speeding-up Graph Algorithms}
In this section, we show the speedup achieved by downstream graph algorithms when we use \textsf{CPGR} algorithm as a preprocessing step across different graph types. Specifically, we present the speedup of the APSP algorithm on real-world datasets and the speedup of Dinitz's algorithm on large synthetic bipartite datasets. Across these algorithms and dataset types, utilizing graphs with reduced number of edges obtained by \textsf{CPGR} as input consistently results in smaller running times. \looseness -1

\begin{figure}[!t]
  \centerline{
  \includegraphics[width=\linewidth]{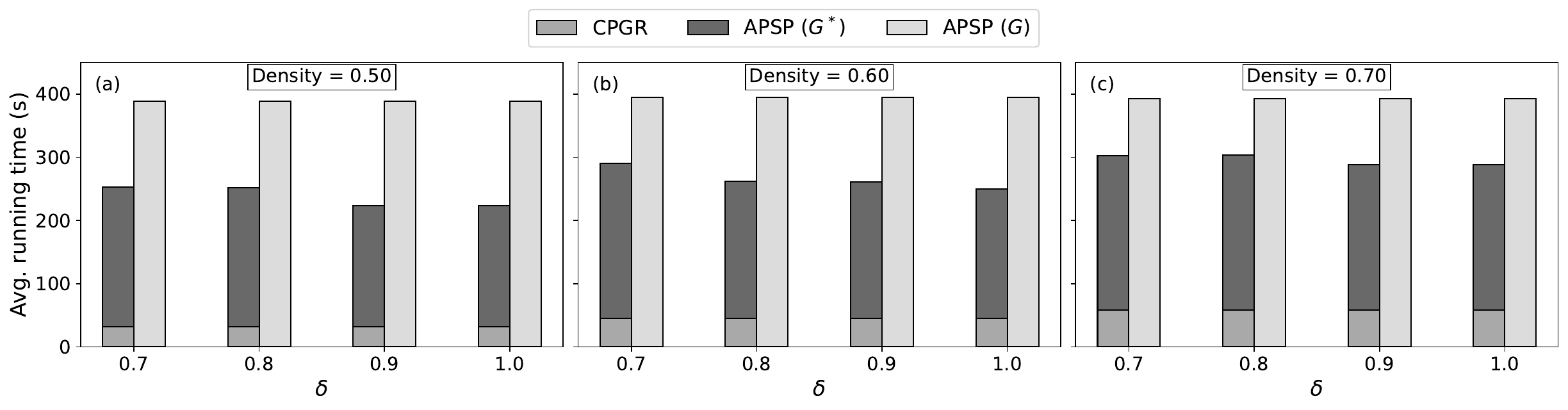}
  }
  \caption{\textsf{CPGR} + \textsf{APSP}($G^*$) vs.\ \textsf{APSP}($G$): Average running time of APSP algorithm on the original graph $G$ (labeled \textsf{APSP}($G$)) and on the restructured graph $G^*$ (labeled \textsf{APSP}($G^*$)), for large graphs with approximately 32,000 vertices and 256 million to 360 million edges corresponding to densities = $0.50, 0.60,$ and $0.70$ and different~$\delta$.}
 \label{fig:apsp_32000}
 \vspace*{-0.5cm}
\end{figure}

\vspace*{0.1cm}
\noindent \textbf{Speeding-up All-Pairs-Shortest-Path Algorithms.}
We evaluate the effectiveness of \textsf{CPGR} for speeding up Breadth-First Search (BFS)–based All-Pairs Shortest Path (\textsf{APSP}) computations on unweighted general graphs.
Figure~\ref{fig:apsp_32000} shows the running time of \textsf{APSP} on the original graph $G$ and on the restructured graph $G^*$ obtained via \textsf{CPGR}.
For each density, the bars represents the average cost of \textsf{CPGR}, \textsf{APSP}($G^*$), and \textsf{APSP}($G$), measured over 10 runs.
To validate correctness, we compare the number of connected vertices in the original and restructured graphs.
This validation confirms that vertex-to-vertex reachability is preserved between $G$ and $G^*$. Moreover, any path discovered by BFS in the restructured graph $G^*$ can be mapped back to a corresponding path in the original graph $G$ by contracting the length-two paths introduced by \textsf{CPGR} for $\delta$-cliques. Therefore, while path lengths in $G^*$ may differ, the \textsf{APSP} results remain correct with respect to the original graph.

Figure~\ref{fig:apsp_32000} shows the results on synthetic graphs with $n=32000$ vertices, varying both graph density and the parameter $\delta$.
The results show that, across all tested densities, the combined execution time of \textsf{CPGR} and \textsf{APSP}($G^*$) consistently outperforms \textsf{APSP}($G$) on large graphs.
Although the preprocessing cost increases with density, the reduction in \textsf{APSP} runtime offsets this overhead, yielding speedups of up to 1.74x.
These results demonstrate that \textsf{CPGR} remains effective even for large, moderately dense graphs.\looseness -1

\vspace*{0.1cm}
\noindent \textbf{Speeding-up Matching Algorithms.}
\begin{figure}
  \centerline{
  \includegraphics[width=\linewidth]{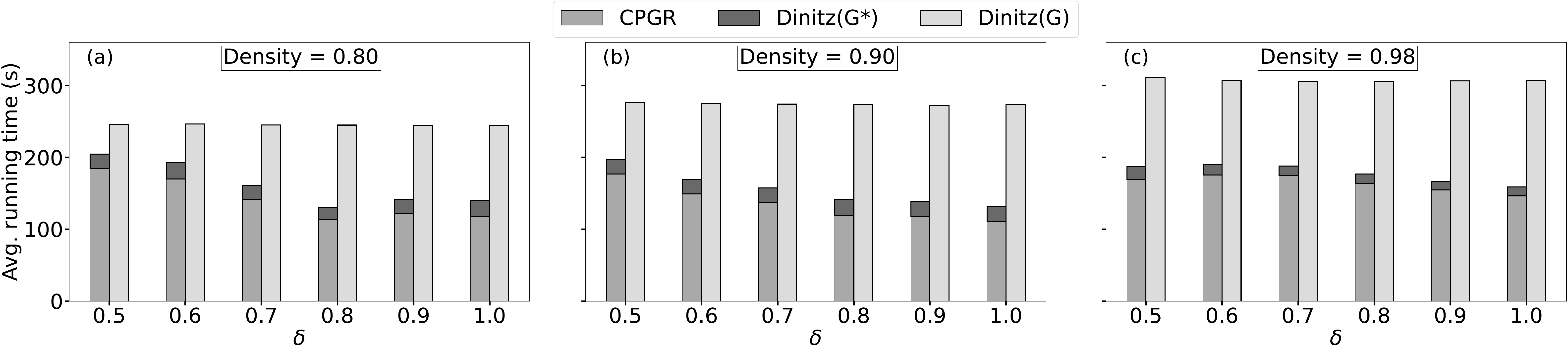}
  }
  \caption{\textsf{CPGR} + \textsf{Dinitz}($G^*$) vs.\ \textsf{Dinitz}($G$): Average running time of \textsf{Dinitz}'s algorithm on the original bipartite graph $G$ (labeled \textsf{Dinitz}($G$)) and on the restructured graph $G^*$ (labeled \textsf{Dinitz}($G^*$)), for a large graph with approximately 32,000 vertices in each bipartition and 819.75 million  to 1.05 billion edges corresponding to densities = $0.80, 0.90,$ and $0.98$ and different~$\delta$.}
 \label{fig:avg_execution_time}
 \vspace*{-0.5cm}
\end{figure}
The restructured graph retains path information, enabling its use as input for other graph algorithms like maximum cardinality matching to accelerate execution. 
For bipartite matching, we focus on \textsf{Dinitz}’s algorithm, which computes a maximum matching via the standard unit-capacity flow formulation. In this formulation, each edge has unit capacity and feasible flows correspond exactly to matching. Under the \textsf{CPGR} restructuring, each edge $(u_i, w_j)$ belonging to an extracted $\delta$-clique is replaced by a two-edge path $(u_i, z_q, w_j)$ in the restructured graph $G^*$, where both edges have unit capacity. This transformation preserves the set of feasible flows: a flow can traverse $(u_i, z_q, w_j)$ in $G^*$ if and only if the corresponding edge $(u_i, w_j)$ could be used in $G$, and matching constraints continue to be enforced by the unit-capacity edges incident to vertices in the left and right partitions. Consequently, \textsf{Dinitz}’s algorithm computes a maximum flow of the same value on $G^*$ as on $G$, and the resulting matching can be mapped back to a maximum matching in $G$ in linear time.

\textsf{Dinitz}'s algorithm~\cite{Dinitz:dans70}, with a runtime of $O(\sqrt{n} m)$ for bipartite matching, benefits from \textsf{CPGR} preprocessing. Given \textsf{CPGR}'s reduction in the number of edges $k=\Omega\Big(\frac{\delta \log n}{\log(2n^2/{m})}\Big)$ (matching \textsf{FM}'s), the total time becomes $O(\sqrt{n} m /k + mn^{\delta})$, which simplifies to $O(\sqrt{n}m \log_n{(\frac{n^2}{m}}))$ for $\delta < 1/2$. Notably, this matches the asymptotic bound for maximum cardinality bipartite matching, achieved by both \textsf{FM} and push-relabel methods. Similar speedups are expected with the Hopcroft-Karp algorithm~\cite{Hopcroft:sjc73}. 
Better asymptotic bounds (almost linear in $m$) are achieved by recent bipartite cardinality matching algorithms such as those presented in~\cite{Chuzhoy:stoc24} ($n^{2+o(1)}$-time randomized algorithm) and~\cite{Brand:focs20} ($\tilde O(m + n^{1.5})$-time randomized algorithm). These algorithms are mainly of theoretical interest, providing the best bounds, and not of practical interest. This is due to the use of sophisticated techniques that require significant implementation effort and they might not even be feasible to implement in practice. Thus, for our experimental analysis we use \textsf{Dinitz}'s algorithm as the baseline for comparisons.

Figure~\ref{fig:avg_execution_time} compares the running time of \textsf{Dinitz}'s algorithm on the original bipartite graph $G$ ($n=32$k per partition, $m=214.75$ Million to 1.05 Billion edges, densities 0.80-0.98) with the combined time of \textsf{CPGR} preprocessing and \textsf{Dinitz} on the restructured graph $G^*$. In all cases, the combined time outperforms \textsf{Dinitz} on $G$, achieving up to a 51.63\% reduction (2.07x speedup). 
On average \textsf{CPGR} + \textsf{Dinitz} ($G^*$) achieves 1.69x speedup when compared to \textsf{Dinitz} ($G$). The speedup correlates with the reduction in the number of edges; for example, at density 0.98, the increasing reduction in the number of edges by \textsf{CPGR} with higher $m$ and $\delta$ leads to a faster \textsf{Dinitz} on $G^*$, resulting in up to a 96.07\% reduction in \textsf{Dinitz}'s runtime compared to $G$. \looseness -1

\vspace{-2mm}
\section{Conclusion and Future Work}
\label{conclusion}
We proposed a novel Clique Partition based Graph Restructuring (\textsf{CPGR}) algorithm that operates by partitioning the graph into bipartite cliques. The algorithm has a time complexity of $O(mn^\delta)$. Experimental results demonstrate the effectiveness of \textsf{CPGR}, achieving a reduction in the number of edges by up to 21\% greater and execution times up to~105.18 times faster than the baseline \textsf{FM} algorithm. Furthermore, we investigated the impact of using the \textsf{CPGR}-restructured graph as input for a cardinality matching algorithm (\textsf{Dinitz}'s algorithm) and an All-Pairs Shortest Path (\textsf{APSP}) algorithm. Experimental results show that for sufficiently large dense graphs, our approach leads to a reduction in the total running time (including \textsf{CPGR} time) of up to~51.63\%, resulting in a speedup of 2.07x over the matching algorithm on the original graph. Similarly, for the \textsf{APSP} algorithm, we observed up to 34.78\% reduction in total running time, achieving a speedup of up to 1.74x. These speedups for fundamental graph algorithms highlight the practical benefits of our graph  restructuring technique, particularly for large-scale real-world applications like route planning and social network analysis where faster processing is critical for efficiency and responsiveness. \looseness -1

Looking ahead, our results indicate that the edge reduction achieved by \textsf{CPGR} is influenced by the interaction between the graph size, the graph density, and the parameter $\delta$. A deeper understanding of these interactions, including the  behavior observed for certain dense graphs, remains an open problem. Future work will focus on theoretically characterizing these interactions, developing adaptive strategies for selecting $\delta$, and extending \textsf{CPGR} to dynamic and streaming graph settings.
\looseness -1

\bibliography{references}
\bibliographystyle{abbrv}

\appendix
\newpage
\appendix
\section{Appendix}
\label{sec:appendix}

\tikzstyle{vertex}=[circle,draw,minimum size=14pt,inner sep=0pt]
\tikzstyle{edge} = [draw,thick,-]
\tikzstyle{weight} = [font=\small]
\usetikzlibrary{decorations.markings}

\usetikzlibrary{shapes,decorations,arrows,calc,arrows.meta,fit,positioning, shapes.geometric}
\tikzset{edge/.style = {->,> = latex'}}

\begin{figure*}[b!]
\centerline{
    \begin{tikzpicture}[ball/.style={ellipse,  draw}, >=LaTeX]
    \tikzstyle{vertex}=[circle,draw,minimum size=8pt,inner sep=0pt]
    \tikzstyle{vertex_set} = [ball/.style={ellipse, minimum width=0.5cm, minimum height=0.5cm, draw}, >=LaTeX]
        \tikzset{edge/.style = {->,> = latex'}}
        \foreach \pos/\name in {
          {(0,1)/w_1}, {(1,1)/w_2}, {(2,1)/w_3}, {(3,1)/w_4},  {(5,1)/w_6}, {(6,1)/w_7}, {(7,1)/w_8}, 
          {(3.5,4)/W}}
        \node[vertex] (\name) at \pos {$\name$};
        \foreach \pos/\name in {{(4,1)/w_5}}
            \node[vertex,blue] (\name) at \pos {$\name$}; 

    \tikzstyle{vertex}=[circle,draw,minimum size=9pt,inner sep=1pt]

    \node[ball, at={(0.5,2)}, font=\fontsize{6}{10}\selectfont] (3) {$\{w_1, w_2\}$};
    \node[ball, at={(2.5,2)}, font=\fontsize{6}{10}\selectfont] (4) {$\{w_3,w_4\}$};
    \node[ball, at={(4.5,2)}, font=\fontsize{6}{10}\selectfont, blue] (5) {$\{w_5,w_6\}$};
    \node[ball, at={(6.5,2)}, font=\fontsize{6}{10}\selectfont] (6) {$\{w_7,w_8\}$};
    \node[ball, at={(1.5,3)}, font=\fontsize{6}{10}\selectfont] (1) {$\{w_1,w_2,w_3,w_4\}$};
    \node[ball, at={(5.5,3)}, font=\fontsize{6}{10}\selectfont, blue] (2) {$\{w_5,w_6,w_7,w_8\}$};

    \draw (W)--(1) node[at={(1.5,3.5)}, font=\fontsize{6}{10}\selectfont] {$(0,3)$};
    \draw (W)--(2)[ blue]  node[at={(5.5,3.5)}, font=\fontsize{6}{10}\selectfont] {$(1,3)$};
    \draw (1)--(3)  node[at={(0.4,2.5)}, font=\fontsize{6}{10}\selectfont] {$(00,1)$};
    \draw (1)--(4)  node[at={(2.55,2.5)}, font=\fontsize{6}{10}\selectfont] {$(01,2)$};
    \draw (2)--(5)[ blue]  node[at={(4.4,2.5)}, font=\fontsize{6}{10}\selectfont] {$(10,2)$};
    \draw (2)--(6)  node[at={(6.55,2.5)}, font=\fontsize{6}{10}\selectfont] {$(11,1)$};

    \draw (3)--(w_1)[dashed]  node[at={(0,0.5)}, font=\fontsize{6}{10}\selectfont] {$(000,0)$};
    \draw (3)--(w_2)  node[at={(1,0.5)}, font=\fontsize{6}{10}\selectfont] {$(001,1)$};
    \draw (4)--(w_3)  node[at={(2,0.5)}, font=\fontsize{6}{10}\selectfont] {$(010,1)$};
    \draw (4)--(w_4)  node[at={(3,0.5)}, font=\fontsize{6}{10}\selectfont] {$(011,1)$};
    \draw (5)--(w_5)[ blue]  node[at={(4,0.5)}, font=\fontsize{6}{10}\selectfont] {$(100,1)$};
    \draw (5)--(w_6)  node[at={(5,0.5)}, font=\fontsize{6}{10}\selectfont] {$(101,1)$};
    \draw (6)--(w_7)  node[at={(6,0.5)}, font=\fontsize{6}{10}\selectfont] {$(110,1)$};
    \draw (6)--(w_8)[dashed]  node[at={(7,0.5)}, font=\fontsize{6}{10}\selectfont] {$(111,0)$};
    \draw node[at={(3.5,4.5)}, font=\fontsize{7}{10}\selectfont] {$u_2: \quad (\epsilon,6)$};
    \end{tikzpicture}
    
}
\vspace*{-0.2cm}
    \caption{Neighborhood tree of vertex $u_2 \in U$ that shows the path $\omega$ taken from root to the vertex $w_j \in W$ at the leaf and number of edges $d_{u_2,\omega}$ for each node using the tuple $(\omega, d_{u_2,\omega})$.}
    \label{fig:neighborhoodTree}
\vspace*{-0.2cm}
\end{figure*}

\subsection{\textsf{FM} algorithm}\label{sec:appendix:FM}

In Algorithm~\ref{alg:fm_org} we give a high-level description of the \textsf{FM} algorithm. 
The \textsf{FM} algorithm first constructs \emph{neighborhood trees} (Figure~\ref{fig:neighborhoodTree}) for each vertex~$u_i$ in~$U$, which are
labeled binary trees of depth~$r$, whose nodes are labeled by a bit string $\omega$, where $0 \leq |\omega| \leq r$. 
The root node of a neighborhood tree contains all the vertices in~$W$ and is labeled by the empty string, $\epsilon$. 
The bit string $\omega$ of a node is obtained by starting at the root and following the path in the tree up to the node and concatenating a 0 or a 1 to $\omega$ each time the path visits the left, or the right child, respectively.
The nodes at level~$i$ in the tree correspond to a partition of~$W$ into sets of size~$|W|/2^i$. Thus, each leaf node corresponds to a vertex in~$W$. Each node in the neighborhood tree of $u_i$ stores $d_{i, \omega}$, the number of neighbors of $u_i$ in the set $W_{\omega}$ of vertices of~$W$ that are associated with the node with label $\omega$, i.e., 
$d_{i, \omega} = |N(u_i) \cap W_{\omega}|$, where $N(u_i)$ is the set of the neighbors of $u_i$.
In Figure~\ref{fig:neighborhoodTree}, we show the neighborhood tree of vertex $u_2$ for the bipartite graph with 
$|U| = |W| = n = 8$ given in Figure~\ref{fig:motivation}. Vertex~$w_3$ is a leaf and would have $\omega =  010$ in all neighborhood trees of the graph. \looseness=-1

\begin{algorithm}[!t]
\caption{{\small \textsf{FM} Algorithm: High-level description}}
{
\small
\begin{algorithmic}[1]
\INPUT {$G(U,W,E)$: Bipartite graph;} 
\State{Initialize: clique index, $q \gets 0$; no.\ of vertices $n \gets |U|$; and no.\ of remaining edges $\hat{m} \gets |E|$.}
\State{Build neighborhood trees for vertices $u_i \in U$.} 
\While{$\hat{m} \geq n^{2-\delta}$}
    \State{$\hat{k} \gets \bigg\lfloor{\frac{\delta \log n}{\log(2n^2/\hat{m})} }\bigg\rfloor$}

     \State \parbox[t]{\dimexpr\linewidth-\algorithmicindent}{
   Starting with $t=1$ and $U_t = U$, iterate  while $t \leq \hat{k}$ to select $\hat{k}$ vertices based on $c_0$ and $c_1$ (calculated as in Equation \ref{eq:c01}) to form the ordered set $K_{q} = \{ y_1, y_2, \ldots, y_{t} \}   \subseteq  W$. 
   At each stage $t$, obtain set $U_{t+1}=\left\{u \in U_t \mid y_1, \ldots, y_{t} \in N(u)\right\}$ which omits vertices from $U$ that are not adjacent to $K_{q}$. And update the corresponding neighborhood trees such that $d_{i,\omega}$ of the nodes along the path from root to the selected vertex is decreased by 1 for each neighborhood tree corresponding to $U_t$.
  }
   \State \parbox[t]{\dimexpr\linewidth-\algorithmicindent}{
   Form the $\delta$-clique $C_q = (U_q, K_q)$, where, $U_q  = U_{t}$.
   }
    \State{Update $G$ by removing the edges associated with clique~$C_q$ and update $\hat{m} \gets |E|$.}
    \State{$q \gets q + 1$}
\EndWhile
\State{For each clique in $\mathcal{C} = \{C_1, \ldots, C_q\}$  add edges from all the vertices in the left partition $U_q$ to an additional vertex~$z_q$, and from~$z_q$ to all the vertices in the right partition $K_q$. This would restructure the graph by replacing $|U_q| \times |K_q|$ edges with $|U_q| + |K_q|$ edges.}
\State{The remaining edges in~$G$ are trivial cliques and are added in the restructured graph~$G^*$.}
\State{\textbf{Output:} $G^*$, the restructured graph of~$G$.}
\end{algorithmic}
\label{alg:fm_org}
}
\end{algorithm}

The algorithm computes~$\hat{k}$ and then, in Line~5, it performs~$\hat{k}$ iterations
(indexed by~$t$) to select vertices for the right partition of clique~$C_q$. In each iteration it selects vertices at leaf nodes of the neighborhood trees by following a path based on $d_{i, \omega}$ of child nodes and the number of distinct ordered subsets of the neighborhood tree at each level. In order to select a path at each level, the algorithm calculates $c_{0}$ and $c_{1}$ as follows: 
\begin{equation}
c_{j} = C\left(U_t, \mathscr{C}_{t, \omega \cdot j}\right)=\sum_{i:{u_{i} \in U_{t}}} d_{i, \omega \cdot j} \cdot\left(d_{i}-1\right)^{[{k}-t]},  \, j= 0,1.  \label{eq:c01}
\end{equation}
$c_{j}$ counts the number of ordered sets in $\mathscr{C}_{t, \omega}$ whose elements are adjacent to~$u_{i}$, where $\mathscr{C}_{t, \omega}$ denotes the collection of all ordered sets $K$ of $W$ of size~$k$ in iteration~$t$, with~$t \leq k$. In each iteration, a vertex 
in~$W$ is added to the right partition of the clique. 
If $|N(u_{i})|=d_{i}$, then the number of distinct ordered subsets of $N(u_{i})$ of size~$k$ is~$d_{i}^{[k]}$, where  $d_{i}^{[k]} = d_{i}(d_{i}-1)(d_{i}-2) \cdots(d_{i}-k+1)$. Based on $c_{j}$'s, the \textsf{FM} algorithm chooses either the left node (if $c_{0} \geq c_{1}$) 
or the right node (if $c_{0} < c_{1}$). When the algorithm reaches a leaf node, it selects the corresponding vertex in~$W$, updates the set of common neighbors ($U_{t}$), and decreases $d_{i, \omega}$ of each node along the path to the selected vertex by~1, in  neighborhood trees corresponding to $U_{t}$.
Thus, it guarantees the selection of a new vertex in the following iterations. The algorithm continues this process until it extracts~$\hat{k}$ unique vertices to form the set~$K_q$, the right partition of the $q$-th $\delta$-clique~$C_q$.

After forming set $K_q$, the \textsf{FM} algorithm proceeds to identify the common neighbors of the vertices that are part of set~$K_q$ in order to construct~$U_q$, which is the left partition of the $q$-th $\delta$-clique~$C_q$. 
The pair~$(K_q, U_q)$ forms a $\delta$-clique~$C_q$. Subsequently, the algorithm updates~$\hat{m}$, the number of remaining edges in the graph after extracting each~$\delta$-clique. This update is done by subtracting the product of the sizes of $K_q$ and $U_q$ from $\hat{m}$, i.e., $|U_q| \times |K_q|$. The algorithm then updates~$\hat{k}$ considering the updated value of~$\hat{m}$ and repeats the procedure to extract more $\delta$-cliques until no more $\delta$-cliques can be found (determined by the condition of the while loop). The edges that are not removed in this process form trivial cliques.

The \textsf{FM} algorithm restructures the graph by adding a new vertex set~$Z$ to the bipartite graph, thus converting it into a tripartite graph (Figure~\ref{fig:motivation}a), where each vertex in~$Z$ corresponds to one clique extracted by the algorithm. Each vertex of the left and right partitions of $\delta$-clique $C_q$ is then connected via an edge to a new vertex $z_q \in Z$, thus forming a tripartite graph.  This decreases the number of edges from $|U_q| \times |K_q|$ to $|U_q| + |K_q|$, thus reducing the number of edges in the graph. The restructured graph obtained by the algorithm preserves the path information of the original graph. The running time of the \textsf{FM} algorithm is~$O(mn^{\delta} \log^2 n)$. The \textsf{FM} algorithm can be \emph{extended to the case of non-bipartite graphs}, where it restructures a graph in time $O(mn^{\delta} \log^2 n)$, as shown by Feder and Motwani~\cite{federMotwani}.

\subsection{Example: \textsf{CPGR} Execution}\label{sec:appendix:example}
We now show how~\textsf{CPGR} works on a bipartite graph~$G$ with partitions~$U$ and~$W$, $n = |U| = |W| = 8$, and $|E| = 54$, shown in Figure~\ref{fig:given_graph(a)}. 
\textsf{CPGR} (Algorithm \ref{alg:CPGR}), first initializes $q = 0, n = |W| = |U| = 8 , \hat{m} = |E| = 54, \mathcal{C} = \emptyset$, and $d_w = [0]_n $. It then calculates the degree of each vertex $d_{w_j}$ in Lines 6-8, as shown in Table~\ref{table:1}. It calculates~$\hat{k}(n,m,\delta) = 2$ in Line~9. Thus, the condition for the while loop in Line~10 is met and \textsf{CPGR} calls \textsf{CSA} in Line~11. With the given inputs, \textsf{CSA} sorts~$d_w$ in non-increasing order. Let $\{w_4, w_2, w_3, w_5, w_6, w_1, w_7, w_8\}$ be the non-increasing order with corresponding degrees given as $d_{w_{\pi(j)}}$ in Table~\ref{table:1}. \looseness=-1 

\begin{table}[h!]
\centering
{\small
 \begin{tabular}{c c c c c c c c c c} 
 \hline
 \textsf{CPGR} step & $j$ & 1 & 2 & 3 & 4 & 5 & 6 & 7 & 8\\ 
  \hline
 Initialization & $d_{w_j}$ & 6 &  7 & 7 & 8 & 7 & 7 & 6 & 6\\ 
 Sorted $d_{w_j}$ & $d_{w_{\pi(j)}} $ & 8 &  7 & 7 & 7 & 7 & 6 & 6 & 6\\ 
After $1^{st}$iteration &  $d_{w_j}$ & 6 &  0 & 0 & 1 & 0 & 7 & 6 & 6\\ 
 \hline
\end{tabular}
\vspace*{0.15cm}
\caption{Degrees of vertices in partition $W$ at different steps of \textsf{CPGR}.}
\label{table:1}
}
\end{table}

\tikzstyle{vertex}=[circle,draw,minimum size=12pt,inner sep=0pt]
\tikzstyle{edge} = [draw,thick,-]
\usetikzlibrary{decorations.markings}
\usetikzlibrary{shapes,decorations,arrows,calc,arrows.meta,fit,positioning}
\tikzset{edge/.style = {->,> = latex'}}

\begin{figure*}[!t]
\centerline{
\subfloat[\centering]{

        \begin{tikzpicture}[scale=0.6, auto, swap]
	\tikzset{edge/.style = {->,> = latex'}}
    		\foreach \pos/\name in {{(0,7)/u_1}, {(0,6)/u_2}, {(0,5)/u_3}, {(0,4)/u_4}, {(0,3)/u_5}, {(0,2)/u_6}, {(0,1)/u_7}, {(0,0)/u_8},
      {(2.6,7)/w_1}, {(2.6,6)/w_2}, {(2.6,5)/w_3}, {(2.6,4)/w_4}, {(2.6,3)/w_5}, {(2.6,2)/w_6}, {(2.6,1)/w_7}, {(2.6,0)/w_8}}
        	\node[vertex] (\name) at \pos {$\name$};
        \draw (u_1)--(w_1);
        \draw (u_1)--(w_2);
        \draw (u_1)--(w_3);
        \draw (u_1)--(w_4);
        \draw (u_1)--(w_5);
        \draw (u_1)--(w_6);

        \draw (u_2)--(w_2);
        \draw (u_2)--(w_3);
        \draw (u_2)--(w_4);
        \draw (u_2)--(w_5);
        \draw (u_2)--(w_6);
        \draw (u_2)--(w_7);

        \draw (u_3)--(w_2);
        \draw (u_3)--(w_3);
        \draw (u_3)--(w_4);
        \draw (u_3)--(w_5);
        \draw (u_3)--(w_6);
        \draw (u_3)--(w_7);
        \draw (u_3)--(w_8);

        \draw (u_4)--(w_1);
        \draw (u_4)--(w_2);
        \draw (u_4)--(w_3);
        \draw (u_4)--(w_4);
        \draw (u_4)--(w_5);
        \draw (u_4)--(w_6);
        \draw (u_4)--(w_7);
        \draw (u_4)--(w_8);

        \draw (u_5)--(w_8);
        \draw (u_5)--(w_1);
        \draw (u_5)--(w_2);
        \draw (u_5)--(w_3);
        \draw (u_5)--(w_4);
        \draw (u_5)--(w_5);
        \draw (u_5)--(w_6);

        \draw (u_6)--(w_4);
        \draw (u_6)--(w_7);
        \draw (u_6)--(w_8);
        \draw (u_6)--(w_1);
        \draw (u_6)--(w_2);

        \draw (u_7)--(w_3);
        \draw (u_7)--(w_4);
        \draw (u_7)--(w_5);
        \draw (u_7)--(w_6);
        \draw (u_7)--(w_7);
        \draw (u_7)--(w_8);
        \draw (u_7)--(w_1);

        \draw (u_8)--(w_1);
        \draw (u_8)--(w_2);
        \draw (u_8)--(w_3);
        \draw (u_8)--(w_4);
        \draw (u_8)--(w_5);
        \draw (u_8)--(w_6);
        \draw (u_8)--(w_7);
        \draw (u_8)--(w_8);            		
 	\end{tikzpicture}
 	\label{fig:given_graph(a)}
}
\subfloat[\centering]{
	\begin{tikzpicture}[scale=0.6, auto, swap]
	\tikzset{edge/.style = {->,> = latex'}}
    		\foreach \pos/\name in {{(0,7)/u_1}, {(0,6)/u_2}, {(0,5)/u_3}, {(0,4)/u_4}, {(0,3)/u_5}, {(0,2)/u_6}, {(0,1)/u_7}, {(0,0)/u_8},
      {(2.6,7)/w_1}, {(2.6,6)/w_2}, {(2.6,5)/w_3}, {(2.6,4)/w_4}, {(2.6,3)/w_5}, {(2.6,2)/w_6}, {(2.6,1)/w_7}, {(2.6,0)/w_8}}
        	\node[vertex] (\name) at \pos {$\name$};

        \draw (u_1)--(w_1);
        \draw [blue,line width=0.5mm] (u_1)--(w_2);
        \draw (u_1)--(w_3);
        \draw [blue,line width=0.5mm] (u_1)--(w_4);
        \draw (u_1)--(w_5);
        \draw (u_1)--(w_6);

        \draw [blue,line width=0.5mm] (u_2)--(w_2);
        \draw (u_2)--(w_3);
        \draw [blue,line width=0.5mm] (u_2)--(w_4);
        \draw (u_2)--(w_5);
        \draw (u_2)--(w_6);
        \draw (u_2)--(w_7);

        \draw [blue,line width=0.5mm](u_3)--(w_2);
        \draw (u_3)--(w_3);
        \draw [blue,line width=0.5mm](u_3)--(w_4);
        \draw (u_3)--(w_5);
        \draw (u_3)--(w_6);
        \draw (u_3)--(w_7);
        \draw (u_3)--(w_8);

        \draw (u_4)--(w_1);
        \draw [blue,line width=0.5mm](u_4)--(w_2);
        \draw (u_4)--(w_3);
        \draw [blue,line width=0.5mm](u_4)--(w_4);
        \draw (u_4)--(w_5);
        \draw (u_4)--(w_6);
        \draw (u_4)--(w_7);
        \draw (u_4)--(w_8);

        \draw (u_5)--(w_8);
        \draw (u_5)--(w_1);
        \draw [blue,line width=0.5mm](u_5)--(w_2);
        \draw (u_5)--(w_3);
        \draw [blue,line width=0.5mm](u_5)--(w_4);
        \draw (u_5)--(w_5);
        \draw (u_5)--(w_6);

        \draw [blue,line width=0.5mm](u_6)--(w_4);
        \draw (u_6)--(w_7);
        \draw (u_6)--(w_8);
        \draw (u_6)--(w_1);
        \draw [blue,line width=0.5mm](u_6)--(w_2);

        \draw (u_7)--(w_3);
        \draw (u_7)--(w_4);
        \draw (u_7)--(w_5);
        \draw (u_7)--(w_6);
        \draw (u_7)--(w_7);
        \draw (u_7)--(w_8);
        \draw (u_7)--(w_1);

        \draw (u_8)--(w_1);
        \draw [blue,line width=0.5mm](u_8)--(w_2);
        \draw (u_8)--(w_3);
        \draw [blue,line width=0.5mm](u_8)--(w_4);
        \draw (u_8)--(w_5);
        \draw (u_8)--(w_6);
        \draw (u_8)--(w_7);
        \draw (u_8)--(w_8);
  		
 	\end{tikzpicture}
 	\label{fig:ch-matching-bip:bm2(b)}
}
\subfloat[\centering]{
	\begin{tikzpicture}[scale=0.6, auto, swap]
	\tikzset{edge/.style = {->,> = latex'}}
    		\foreach \pos/\name in {{(0,7)/u_1}, {(0,6)/u_2}, {(0,5)/u_3}, {(0,4)/u_4}, {(0,3)/u_5}, {(0,2)/u_6}, {(0,1)/u_7}, {(0,0)/u_8},
      {(2.6,7)/w_1}, {(2.6,6)/w_2}, {(2.6,5)/w_3}, {(2.6,4)/w_4}, {(2.6,3)/w_5}, {(2.6,2)/w_6}, {(2.6,1)/w_7}, {(2.6,0)/w_8}}
        	\node[vertex] (\name) at \pos {$\name$};
        \draw (u_1)--(w_1);
        \draw [red,line width=0.5mm](u_1)--(w_3);
        \draw [red,line width=0.5mm](u_1)--(w_5);
        \draw (u_1)--(w_6);

        \draw [red,line width=0.5mm](u_2)--(w_3);
        \draw [red,line width=0.5mm](u_2)--(w_5);
        \draw (u_2)--(w_6);
        \draw (u_2)--(w_7);

        \draw [red,line width=0.5mm](u_3)--(w_3);
        \draw [red,line width=0.5mm](u_3)--(w_5);
        \draw (u_3)--(w_6);
        \draw (u_3)--(w_7);
        \draw (u_3)--(w_8);

        \draw (u_4)--(w_1);
        \draw [red,line width=0.5mm](u_4)--(w_3);
        \draw [red,line width=0.5mm](u_4)--(w_5);
        \draw (u_4)--(w_6);
        \draw (u_4)--(w_7);
        \draw (u_4)--(w_8);

        \draw (u_5)--(w_8);
        \draw (u_5)--(w_1);
        \draw [red,line width=0.5mm](u_5)--(w_3);
        \draw [red,line width=0.5mm](u_5)--(w_5);
        \draw (u_5)--(w_6);

        \draw (u_6)--(w_7);
        \draw (u_6)--(w_8);
        \draw (u_6)--(w_1);

        \draw [red,line width=0.5mm](u_7)--(w_3);
        \draw (u_7)--(w_4);
        \draw [red,line width=0.5mm](u_7)--(w_5);
        \draw (u_7)--(w_6);
        \draw (u_7)--(w_7);
        \draw (u_7)--(w_8);
        \draw (u_7)--(w_1);

        \draw (u_8)--(w_1);
        \draw [red,line width=0.5mm](u_8)--(w_3);
        \draw [red,line width=0.5mm](u_8)--(w_5);
        \draw (u_8)--(w_6);
        \draw (u_8)--(w_7);
        \draw (u_8)--(w_8);

 	\end{tikzpicture}
 	\label{fig:ch-matching-bip:bm2(c)}
}

\subfloat[\centering]{
\centering
	\begin{tikzpicture}[scale=0.6, auto, swap]
	\tikzset{edge/.style = {->,> = latex'}}
    		\foreach \pos/\name in {{(0,7)/u_1}, {(0,6)/u_2}, {(0,5)/u_3}, {(0,4)/u_4}, {(0,3)/u_5}, {(0,2)/u_6}, {(0,1)/u_7}, {(0,0)/u_8},
      {(2.6,7)/w_1}, {(2.6,6)/w_2}, {(2.6,5)/w_3}, {(2.6,4)/w_4}, {(2.6,3)/w_5}, {(2.6,2)/w_6}, {(2.6,1)/w_7}, {(2.6,0)/w_8}}
        	\node[vertex] (\name) at \pos {$\name$};
        	


        \draw (u_1)--(w_1);
        \draw (u_1)--(w_6);

        \draw (u_2)--(w_6);
        \draw (u_2)--(w_7);

        \draw (u_3)--(w_7);
        \draw (u_3)--(w_6);
        \draw (u_3)--(w_8);

        \draw (u_4)--(w_1);
        \draw (u_4)--(w_7);
        \draw (u_4)--(w_6);
        \draw (u_4)--(w_8);

        \draw (u_5)--(w_8);
        \draw (u_5)--(w_1);
        \draw (u_5)--(w_6);

        \draw (u_6)--(w_7);
        \draw (u_6)--(w_8);
        \draw (u_6)--(w_1);

        \draw (u_7)--(w_4);
        \draw (u_7)--(w_7);
        \draw (u_7)--(w_8);
        \draw (u_7)--(w_1);
        \draw (u_7)--(w_6);

        \draw (u_8)--(w_1);
        \draw (u_8)--(w_7);
        \draw (u_8)--(w_6);
        \draw (u_8)--(w_8);

 	\end{tikzpicture}
 	\label{fig:ch-matching-bip:bm2(d)}
}


 \subfloat[\centering]{
	\begin{tikzpicture}[scale=0.6, auto, swap]
	\tikzset{edge/.style = {->,> = latex'}}
    		\foreach \pos/\name in {{(0,7)/u_1}, {(0,6)/u_2}, {(0,5)/u_3}, {(0,4)/u_4}, {(0,3)/u_5}, {(0,2)/u_6}, {(0,1)/u_7}, {(0,0)/u_8}, 
      {(3,7)/w_1}, {(3,6)/w_2}, {(3,5)/w_3}, {(3,4)/w_4}, {(3,3)/w_5}, {(3,2)/w_6}, {(3,1)/w_7}, {(3,0)/w_8}}
        	\node[vertex] (\name) at \pos {$\name$};


        \draw (u_1)--(w_1);

        \draw   (u_2)--(w_7);

        \draw  (u_3)--(w_7);
        \draw  (u_3)--(w_8);

        \draw   (u_4)--(w_1);
        \draw  (u_4)--(w_7);
        \draw  (u_4)--(w_8);

        \draw  (u_5)--(w_8);
        \draw  (u_5)--(w_1);

        \draw  (u_6)--(w_7);
        \draw  (u_6)--(w_8);
        \draw  (u_6)--(w_1);

        \draw  (u_7)--(w_4);
        \draw  (u_7)--(w_7);
        \draw  (u_7)--(w_8);
        \draw  (u_7)--(w_1);

        \draw  (u_8)--(w_1);
        \draw  (u_8)--(w_7);
        \draw  (u_8)--(w_8);


        \draw (u_2)--(w_6);

        \draw (u_3)--(w_6);

        \draw (u_4)--(w_6);

        \draw (u_5)--(w_6);

        \draw (u_7)--(w_6);

        \draw (u_8)--(w_6);
        

        \draw (u_1)--(w_6);


            \foreach \pos/\name in {{(1.5,4.5)/z_1}, {(1.5,3)/z_2}}
                \node[vertex,fill=gray] (\name) at \pos {$\name$};
        \draw [red,line width=0.5mm](u_1)--(z_2);
        \draw [red,line width=0.5mm](u_2)--(z_2);
        \draw [red,line width=0.5mm](u_3)--(z_2);

        \draw [red,line width=0.5mm](u_4)--(z_2);

        \draw [red,line width=0.5mm](u_5)--(z_2);

        \draw [red,line width=0.5mm](u_7)--(z_2);

        \draw [red,line width=0.5mm](u_8)--(z_2);
        
        \draw [red,line width=0.5mm](z_2)--(w_3);
        \draw [red,line width=0.5mm](z_2)--(w_5);

                \draw [blue,line width=0.5mm] (u_1)--(z_1);

        \draw [blue,line width=0.5mm] (u_2)--(z_1);

        \draw [blue,line width=0.5mm](u_3)--(z_1);

        \draw [blue,line width=0.5mm](u_4)--(z_1);

        \draw [blue,line width=0.5mm](u_5)--(z_1);

        \draw [blue,line width=0.5mm](u_6)--(z_1);

        \draw [blue,line width=0.5mm](u_8)--(z_1);
        
        \draw [blue,line width=0.5mm](z_1)--(w_2);
        \draw [blue,line width=0.5mm](z_1)--(w_4);

 	\end{tikzpicture}
 	\label{fig:ch-matching-bip:bm2(e)}}

}
\vspace*{-0.2cm}
\caption{\small Clique partitioning: (a) given bipartite graph $G(U,W)$; (b) extracted clique $C_1$; (c) extracted clique $C_2$; (d) graph with trivial cliques; and (e) restructured tripartite graph $G^*(U,Z,W)$}
\label{fig:ch-mathing-bip:bm(caption)}
\end{figure*}

In Line 5, \textsf{CSA} forms set $\mathcal{K}$, where $\mathcal{K} = \{ w_4, w_2, w_3, w_5, w_6 \}$ has all 
vertices whose degrees are greater than or equal to $d_{w_{\pi(\hat{k})}}=7$. This results in $\gamma = 2$ 
in Line~7, therefore \textsf{CSA} extracts two cliques in Lines 8-13.
Since~$q$ is initialized to~0 in \textsf{CPGR}, the for loop in Line~8 iterates two times for~$c=1$ and~$2$. In Line~9, \textsf{CSA} forms $K_1 = \{w_4, w_2\}$, and in Line 10, it forms the left partition $U_{K_{1}}$ of bipartite clique~$\mathcal{C}_1$, by selecting the common neighbors of the vertices in $K_{1}$. For example, for set $K_1 = \{w_4, w_2\}$, $U_{K_1} = U - \{u_7\}$ as $w_4$ and $w_2$ have $\{u_1, u_2, u_3, u_4, u_5, u_6, u_8\}$ as common neighbors. In Line~12, \textsf{CSA} updates the adjacency matrix~$A$ by removing the edges of bipartite clique~$\mathcal{C}_1$ from the original bipartite graph~$G$ and finally in Line~13, it updates all~$d_w$ by subtracting $|U_{K_1}| = 7$ from both $d_{w_{4}}$ and $d_{w_{2}}$, thus $d_{w_{4}}=1$ and $d_{w_{2}}=0$.
Similarly, when $c=2$, $K_2 = \{w_3, w_5\}$ and $U_{K_2} = U - \{u_6\}$. \textsf{CSA} updates the adjacency matrix $A$ and $d_w$ such that $d_{w_{3}}=d_{w_{5}}=0$. 
Therefore, in the first execution, \textsf{CSA}, forms two bipartite cliques and removes a total of $(7 \times 2) + (7 \times 2)  = 28$ edges from~$G$. The updated degrees of the vertices are shown in Table~\ref{table:1}. 
It then forms the set of bipartite cliques extracted in the current execution in Line~14, $\mathscr{C} =\{C_{1}, C_{2}\}$. The two cliques are shown in Figures~\ref{fig:ch-matching-bip:bm2(b)} and~\ref{fig:ch-matching-bip:bm2(c)}.
\looseness=-1

\textsf{CSA} returns $(\mathscr{C}, \hat{A}, \hat{d}_w)$ and then \textsf{CPGR} updates the set $\mathscr{C} =\{C_{1}, C_{2}\}$, $d_w$, the adjacency matrix~$A$, and the number of edges $\hat{m}= 54-28 =26$, in Lines~\mbox{12-15}. In Line 16, \textsf{CPGR} updates $q=2$ and finally, in Line 17 it updates~$\hat{k}$ according to the new value of~$\hat{m}$ in Line 15 which results in $\hat{k} =1$. Since $\hat{k} =1$, it means \textsf{CPGR} extracts only trivial bipartite cliques (shown in Figure~\ref{fig:ch-matching-bip:bm2(d)}) which do not contribute to reduction in the number of edges in the restructured graph. Thus, it does not meet the condition in the while loop (Line 6) and \textsf{CPGR} terminates.

\textsf{CPGR} then restructures the graph by adding two vertices, $z_1$, and  $z_2$, corresponding to the two cliques $C_1$ and $C_2$, and adds the corresponding edges to form the tripartite graph. The edges in the given graph $G(U,W,E)$, that are not part of any $\delta$-clique are connected directly as shown in Figure~\ref{fig:ch-matching-bip:bm2(e)}.

\end{document}